\documentclass[superscriptaddress,twocolumn, showpacs,nofootinbib,longbibliography,notitlepage]{revtex4-2}
\UseRawInputEncoding
\usepackage{pgfplots}
\usepackage{etex}
\usepackage{physics}
\usepackage{amsmath,amssymb,amsthm,amstext,mathrsfs}
\usepackage{times,txfonts}
\usepackage{braket}
\usepackage{subcaption}
\usepackage{color}
\usepackage{natbib}
\usepackage{blkarray}
\usepackage{mathtools}
\usepackage{latexsym}
\usepackage{tabularx, booktabs}
\usepackage{graphics,epstopdf}
\usepackage{float}
\usepackage{graphicx}
\usepackage{amsfonts}
\usepackage{subcaption}
\usepackage{enumerate}
\usepackage{soul}
\usepackage{comment}
\usepackage[inkscapelatex=true]{svg}
\usepackage{hyperref}
\hypersetup{
	colorlinks=true,
	linkcolor=red,
	filecolor=blue,      
	urlcolor=blue,
	citecolor=purple,
}

\newtheorem{Defi}{Definition}
\newtheorem{Propos}{Proposition}

\newtheorem{thm}{Theorem}
\newtheorem{Lemma}{Lemma}
\newtheorem{Remark}{Remark}

\newtheorem{Corollary}{Corollary}

\begin{document}

\title{Unbounded Communication Power of a Qubit}

\author{Souradeep Sasmal}
\email{souradeep.007@gmail.com}
\affiliation{Institute of Fundamental and Frontier Sciences, University of Electronic Science and Technology of China, Chengdu 611731, China}

\author{Som Kanjilal}
%\email{}
\affiliation{International Iberian Nanotechnology Laboratory, Avenida Mestre Jos\'{e} Veiga, 4715-330 Braga, Portugal}

\author{Debarshi Das}
\email{dasdebarshi90@gmail.com}
\affiliation{Department of Physics, Shiv Nadar Institution of Eminence, Gautam Buddha Nagar, Uttar Pradesh 201314, India}

%%%%%%%%%%%%%%%%%%%%%%%%%%%%%%%%%%%%%%%%%%%%%%%%%%%%%%%%%%%%%%%%%%%%%%%%%%%%%%%%%%%%%%%%%%%%%%%%%%%%%%%%%%%%%%%%%%%%%%%%%%%%%%%%%%%%%%%%%%%%%%%%%%%%%%%%%%%%%%%%%%%%%%%%%%%%%%%%%%%%%%%%%%%%%%%%%%%%%%%%%%%%%%%%%%%%%%%%%%%%%%%%%%%%%%%%%%%%%%%%%%%%%%%%%%%%%%%%%%%%%%%%%%%%%%%%%%%%%%%%%%%%%%%%%%%%%%%%%%%%%%%%%%%%%%%%%%%%%%%%%%%%%%%%%%%%%%%%%%%%%%%%%%%%%%%%

\begin{abstract}

Quantum mechanics enables information-processing advantages even at the level of a single qubit. A paradigmatic example is the 2$\to$1 random access code (RAC), where a qubit outperforms a classical bit in retrieving encoded information. In the standard form, however, this quantum advantage is restricted to a single receiver, since decoding measurements inevitably destroy the encoded information. Contrary to this, we address how long the information encoded in a single qubit remains accessible even after multiple decoding, each with a quantum advantage. Introducing preparation distinguishability as an operational resource associated with the sender, we show that its interplay with measurement incompatibility on the receiver's side can mitigate measurement-induced disturbance, thereby enabling an arbitrarily long sequence of receivers to each retain a quantum advantage. Our results show that, even under repeated measurements, the information encoded in a qubit need not be entirely exhausted, revealing a stronger communication feature than previously recognised.

\end{abstract}
 
\pacs{} 

\maketitle

%%%%%%%%%%%%%%%%%%%%%%%%%%%%%%%%%%%%%%%%%%%%%%%%%%%%%%%%%%%%%%%%%%%%%%%%%%%%%%%%%%%%%%%%%%%%%%%%%%%%%%%%%%%%%%%%%%%%%%%%%%%%%%%%%%%%%%%%%%%%%%%%%%%%%%%%%%%%%%%%%%%%%%%%%%%%%%%%%%%%%%%%%%%%%%%%%%%%%%%%%%%%%%%%%%%%%%%%%%%%%%%%%%%%%%%%%%%%%%%%%%%%%%%%%%%%%%%%%%%%%%%%%%%%%%%%%%%%%%%%%%%%%%%%%%%%%%%%%%%%%%%%%%%%%%%%%%%%%%%%%%%%%%%%%%%%%%%%%%%%%%%%%%%%%%%%

\section{Introduction}

The ability of a single qubit to outperform a single classical bit in information-processing tasks is a cornerstone of quantum information theory \cite{Weisner1983, Bennett1992}. For instance, in the 2$\to$1 random access code (RAC), a sender, Alice, encodes two classical bits $x=(x_1,x_2)\in\qty{0, 1}^{2}$ into a single qubit, enabling a receiver, Bob, to retrieve any one of them with a success probability exceeding that with one classical bit of communication \cite{Ambainis1999,Ambainis2002,Ambainis2009, Spekkens2009}. This advantage is widely regarded as a manifestation of the intrinsic power of quantum systems in prepare-measure based communication tasks \cite{Gallego2010,Gois2021,Manna2024,Singh2025,Brask2026}. 

Since quantum measurements are inherently invasive \cite{Home1997book}, the standard formulation of quantum RACs typically assumes that the receiver performs projective measurements for decoding, thereby effectively destroying the accessible information encoded in the qubit. However, recent work has demonstrated that this conventional paradigm does not fully exploit the communication power of a qubit: by employing non-projective (unsharp) measurements \cite{Busch1986, Busch2016book}, it becomes possible for up to two sequential independent receivers to extract information, each establishing a quantum advantage \cite{Mohan2019,Anwer2020,Milkin2020}. This naturally raises a fundamental question: is this the ultimate limit of a qubit's sequential communication capability? 

Specifically, we consider a sequential RAC scenario comprising  a single Alice, and multiple receivers, Bob$^{(k)}$ with $k=1,2,3, \ldots$, who act sequentially and independently. In each run, Alice prepares a qubit state $\{\rho^{(k=1)}_{x_1x_2}\}$ encoding her input and sends it to Bob$^{(k=1)}$. Upon receiving an input $y_{k=1}\in\{1,2\}$, Bob$^{(1)}$ performs one of two dichotomic qubit measurements in order to guess the $y_1$-th bit. The post-measurement state is then passed to Bob$^{(2)}$, who performs an analogous decoding procedure upon receiving $y_2\in\{1,2\}$. This process continues for an arbitrary number of sequential receivers. Crucially, all Bobs act independently and have no access to the inputs or outcomes of the others. The scenario is illustrated in Fig.~\ref{seq}. The central question we address is: how many sequential receivers can each obtain a quantum advantage?

We address this question by relating the standard 2$\to$1 RAC to two constrained preparation discrimination tasks, whose optimal success probabilities are quantified by \textit{preparation distinguishability}, viz. the maximal statistical distance between marginal ensembles prepared by the sender. We identify this quantity as a necessary resource for achieving quantum advantage. On the receiver's side, measurement incompatibility \cite{Heinosaari2016,Gunhe2023}, quantified through the unsharpness parameters in the context of unsharp qubit measurements \cite{Busch1986, Busch2016book}, acts as a complementary resource \cite{Saha2023,Saha2020,Carmeli2020,Skrzypczyk2019}. We further establish a trade-off relation between these two resources in determining the quantum advantage.

Building on this framework, we analyse different decoding strategies. When both measurements are unsharp with the same unsharpness parameter, as considered in \cite{Mohan2019,Anwer2020}, preparation distinguishabilities approaching the classical limit, i.e. one pair of marginal ensembles becoming perfectly distinguishable and the other pair becoming indistinguishable, requires a greater degree of quantumness on the receiver's side, i.e. an increased degree of measurement incompatibility, in order to sustain a quantum advantage.

In contrast, when one measurement is projective and the other unsharp, a reduced measurement incompatibility suffices to retain quantum advantage, even when the preparation distinguishabilities approach the classical bound. This leads to a situation in which such decoding measurements preserve the preparation distinguishabilities maximally. This observation underpins our main result: despite the unavoidable disturbance induced by each measurement, an arbitrarily long sequence of independent receivers can, in principle, extract information from a single qubit, each with a quantum advantage in the sequential RAC scenario. We begin by formalising the notion of preparation distinguishability.

\begin{figure}[t!]
\includegraphics[width=0.85\linewidth]{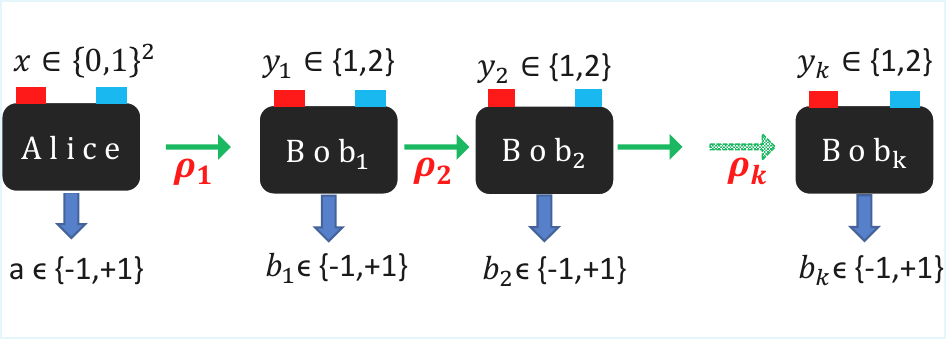}
\caption{Alice's preparations are sequentially passed from one Bob to the next, each of whom performs one of the two measurements aiming to independently reveal quantum advantage in the 2$\to$1 RAC.}
\label{seq}
\end{figure}

%%%%%%%%%%%%%%%%%%%%%%%%%%%%%%%%%%%%%%%%%%%%%%%%%%%%%%%%%%%%%%%%%%%%%%%%%%%%%%%%%%%%%%%%%%%%%%%%%%%%%%%%%%%%%%%%%%%%%%%%%%%%%%%%%%%%%%%%%%%%%%%%%%%%%%%%%%%%%%%%%%%%%%%%%%%%%%%%%%%%%%%%%%%%%%%%%%%%%%%%%%%%%%%%%%%%%%%%%%%%%%%%%%%%%%%%%%%%%%%%%%%%%%%%%%%%%%%%%%%%%%%%%%%%%%%%%%%%%%%%%%%%%%%%%%%%%%%%%%%%%%%%%%%%%%%%%%%%%%%%%%%%%%%%%%%%%%%%%%%%%%%%%%%%%%%%

\section{Role of preparation distinguishabilities in 2$\to$1 RAC}

\subsection{Preparation Distinguishability} 

In any operational theory, a \textit{preparation} $P\in\mathscr{P}$ specifies a procedure for generating physical systems according to some ensemble description, while a \textit{measurement} $M \in \mathscr{M}$ is a test applied to those systems, producing outcomes $m\in\mathcal{O}_m$ with probability $p(m|P,M)$ i.e.  $p(m|P,M)\geq 0$ for all $m$ and $\sum_m  p(m|P,M) =1$ \cite{Spekkens2005}. 
\begin{Defi}
The \textit{preparation distinguishability} between the two preparations $P_0$ and $P_1$ is the maximum total variation distance between their outcome distributions, optimised over all possible measurements
\begin{equation}\label{pdisd}
    \Delta(P_0, P_1) = \max_M \ \frac{1}{2}\sum_m \abs{p(m|P_0,M)-p(m|P_1,M)}.
\end{equation}
\end{Defi}
By definition, $0\leq \Delta(P_0, P_1) \leq 1$. If $\Delta(P_0, P_1)=0$, the preparations are operationally equivalent, since no measurement can distinguish between them \cite{Schmid2018}. If $ \Delta(P_0, P_1)=1$, their outcome distributions have disjoint support, i.e. there exists a measurement that perfectly distinguishes them. 

To understand its operational significance, consider a binary discrimination task in which one of $P_0$ or $P_1$ is chosen uniformly and given to Bob, who performs a measurement $M$ with outcomes $m\in\qty{0,1}$. Bob guesses $P_0$ upon observing $m=0$, and  $P_1$ otherwise. His average success probability, optimised over all measurements, is 
\begin{equation}\label{optgdistas}
    p_{\rm g}^{\rm opt}(P_0,P_1)= \frac{1}{2}  \max_M \sum_{m\in \{0,1\}} p(m|P_m,M).
\end{equation}
For any pair of preparations $P_0,P_1$ (see Appendix \ref{apcgppd}),    
\begin{equation}\label{cgppdeq}
p_{\rm g}^{ \rm opt}(P_0,P_1)=\frac{1}{2}\qty[1+\Delta(P_0,P_1)].
\end{equation}
This operational equivalence between $p_{\rm g}^{\rm opt}(P_0,P_1)$ and $\Delta(P_0,P_1)$ will play a central role in what follows.

%%%%%%%%%%%%%%%%%%%%%%%%%%%%%%%%%%%%%%%%%%%%%%%%%%%%%%%%%%%%%%%%%%%%%%%%%%%%%%%%%%%%%%%%%%%%%%%%%%%%%%%%%%%%%%%%%%%%%%%%%%%%%%%%%%%%%%%%%%%%%%%%%%%%%%%%%%%%%%%%%%%%%%%%%%%%%%%%%%%%%%%%%%%%%%%%%%%%%%%%%%%%%%%%%%%%%%%%%%%%%%%%

\subsection{2$\to$1 RAC task as preparation discrimination tasks}

After receiving the encoded preparation $P_{x} \in \{P_{x_1x_2}\}$ from Alice via one classical-/quantum-bit  communication, Bob performs a dichotomic measurement $M_y$, depending on his input $y\in\{1,2\}$, yielding an outcome $b\in\qty{0,1}$. The task is successful if Bob correctly outputs the $y^{th}$-bit of Alice's input, i.e. when $b=x_y$. The average success probability is \cite{Spekkens2009}
\begin{equation}\label{succprob}
       \mathcal{P}_{\rm avg} = \frac{1}{8} \sum_{y \in \{1,2\}, x \in \{0,1\}^2} p(b = x_y | P_{x},M_y). 
\end{equation}
The key observation is that this RAC task can be decomposed into two distinct, although not independent, binary discrimination tasks. To make this precise, consider the first bit $x_1$. Define the marginal ensembles
\begin{equation}\label{pren1}
    P_0^{(1)}=\frac{1}{2}\qty{P_{00},P_{01}}; \ \  P_1^{(1)}=\frac{1}{2}\qty{P_{10},P_{11}}.
\end{equation}
Operationally, $P_{x_1}^{(1)}$ corresponds to preparing  either $P_{x_1 0}$ or $P_{x_1 1}$ with equal probability. Thus, when Bob receives the input $y=1$, his task is to infer whether the communicated preparation originated from the ensemble $P_0^{(1)}$ or $P_1^{(1)}$. An analogous construction applies to the second bit $x_2$.
\begin{Propos}
   In a 2$\to$1 RAC, the average success probability can be expressed in terms of the distinguishabilities of the marginal ensembles as 
   \begin{equation}
       \mathcal{P}_{\rm avg} \leq \frac{1}{2}+\frac{\Delta_1+\Delta_2}{4},
       \label{avgsuccfmp}
   \end{equation}
where $\Delta_y:=\Delta(P_0^{(y)},P_1^{(y)})$. The bound is saturated when Bob performs the optimal measurements for the corresponding discrimination tasks.
\end{Propos}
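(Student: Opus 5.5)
The plan is to split the sum in Eq.~(\ref{succprob}) according to Bob's input $y$ and to recognise each half as a binary discrimination task of the type behind Eq.~(\ref{optgdistas}). Fix $y=1$ and group the four preparations by the value of $x_1$. Since $p(b|P_{x_1x_2},M_1)$ enters only through $b=x_1$, the $y=1$ part becomes
\begin{equation*}
\frac{1}{8}\sum_{x_1\in\{0,1\}}\sum_{x_2\in\{0,1\}} p(b=x_1|P_{x_1x_2},M_1)=\frac{1}{4}\sum_{x_1\in\{0,1\}} p(b=x_1|P^{(1)}_{x_1},M_1).
\end{equation*}
This uses the operational rule for mixtures: the equal-weight ensemble $P^{(1)}_{x_1}$ of Eq.~(\ref{pren1}) yields outcome statistics $p(m|P^{(1)}_{x_1},M)=\frac12\sum_{x_2}p(m|P_{x_1x_2},M)$. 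This convex-mixing property of preparations is standard in operational theories, and I would state it explicitly because the whole reduction rests on it. The $y=2$ part is handled the same way, with $P^{(2)}_{x_2}$ obtained by averaging over $x_1$.

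Each bracket then equals $\frac12\cdot\frac12\sum_{m}p(m|P^{(y)}_m,M_y)$, i.e.\ one half of the success probability of discriminating $P^{(y)}_0$ from $P^{(y)}_1$ with the fixed measurement $M_y$ and the guessing rule ``output $m$''. This success probability is at most the optimum $p_{\rm g}^{\rm opt}(P^{(y)}_0,P^{(y)}_1)$ of Eq.~(\ref{optgdistas}). By Eq.~(\ref{cgppdeq}) that optimum equals $\frac12(1+\Delta_y)$. Summing over $y$ gives
\begin{equation*}
\mathcal{P}_{\rm avg}\le \frac12\sum_{y}\frac12(1+\Delta_y)=\frac12+\frac{\Delta_1+\Delta_2}{4},
\end{equation*}
which is Eq.~(\ref{avgsuccfmp}).

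For saturation, the two measurements $M_1$ and $M_2$ are chosen independently, since Bob's choice depends on $y$. Hence each can be taken to be an optimiser of its own discrimination problem, with the outcome labelling matched to the guess. The inequality then holds with equality term by term. The only subtle point, and the one I expect to be the main obstacle, is that the $\Delta_y$ themselves depend on Alice's encoding, so the two tasks are coupled through the preparations. The bound is stated for fixed preparations, however, so this coupling does not affect the argument; it matters only later, for the joint optimisation. I would also remark that if a measurement has more than two outcomes, any coarse-graining into a binary guess is still covered by the maximisation in Eq.~(\ref{pdisd}), so restricting to dichotomic $M_y$ loses no generality.
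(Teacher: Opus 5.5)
Your proposal is correct and follows essentially the same route as the paper's proof in Appendix~B. You split the sum by $y$ and use the equal-weight mixture rule to rewrite each half as the fixed-measurement guessing probability $p_{\rm g}(P_0^{(y)},P_1^{(y)}|M_y)$; you then bound it by $p_{\rm g}^{\rm opt}=\frac12(1+\Delta_y)$ and average over $y$, with saturation when each $M_y$ is chosen optimally. Your remarks on coarse-graining multi-outcome measurements and on choosing $M_1$ and $M_2$ independently are sound, and they make explicit points the paper leaves implicit.
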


\begin{proof}
For each input $y \in \qty{1,2}$, Bob’s task reduces to  discriminating operationally between two marginal ensembles $P_0^{(y)}$ and $P_1^{(y)}$. The optimal success probability for this binary discrimination problem is 
\begin{equation}
    p_g^{\rm opt}\qty(P_0^{(y)},P_1^{(y)})=\frac{1}{2}\qty(1+\Delta_y).
\end{equation}
Averaging over the uniformly distributed inputs $y$ yields the stated bound. Thus, the RAC task reduces to maximising $\Delta_1+\Delta_2$, subject to the constraints imposed by the underlying physical theory. Further details are provided in Appendix~\ref{apasppd}.
\end{proof}

Since both marginal ensembles originate from the same underlying set of preparations, $\Delta_1$ and $\Delta_2$ are not independent, but are jointly constrained by the underlying physical theory. In particular, within the preparation-noncontextual ontological models, the RAC success probability is bounded by $3/4$ \cite{Spekkens2009}, which also coincides with the optimal average success probability when communication between Alice and Bob is restricted to a single classical bit \cite{Ambainis2009}. Substituting this into  Eq.~\eqref{avgsuccfmp} yields the corresponding constraint: $\Delta_1+\Delta_2 \underset{\rm PNC/1 cbit}{\leq} 1$.
By contrast, we show that quantum theory allows larger jointly attainable values of $(\Delta_1,\Delta_2)$ for suitably incompatible measurements. This leads to a violation of the above bound and, hence, to a quantum advantage in the RAC.

%%%%%%%%%%%%%%%%%%%%%%%%%%%%%%%%%%%%%%%%%%%%%%%%%%%%%%%%%%%%%%%%%%%%%%%%%%%%%%%%%%%%%%%%%%%%%%%%%%%%%%%%%%%%%%%%%%%%%%%%%%%%%%%%%%%%%%%%%%%%%%%%%%%%%%%%%%%%%%%%%%%%%%%%%%%%%%%%%%%%%%%%%%%%%%%%%%%%%%%%

\subsection{Quantum bound on preparation distinguishabilities} \label{21RACoptq}

Here, Alice's preparations are represented by density operators $\rho_x \in \mathscr{L}\qty(\mathcal{H}_2)$, and Bob's measurements are dichotomic qubit measurements $\qty{E_{b|y}}$. The average success probability is
\begin{equation}\label{2bitsuccp}
\begin{aligned}
    \mathcal{P}_{\rm avg}^{(Q)} &= \frac{1}{8} \sum_{x\in\{0,1\}^2 } \sum_{y \in \{1,2\}} \Tr[\rho_{x} E_{b=x_y|y}].
\end{aligned}
\end{equation}
The marginal ensembles are related to $\rho_{x_1 x_2}$ as follows
\begin{equation} \label{alicepr}
\begin{aligned}
    \rho_{0}^{(1)} &= \frac{1}{2}\qty(\rho_{00}+\rho_{01}); \ \ \rho_{1}^{(1)} = \frac{1}{2}\qty(\rho_{10}+\rho_{11}); \\
    \rho_{0}^{(2)} &= \frac{1}{2}\qty(\rho_{00}+\rho_{10}); \ \ \rho_{1}^{(2)} = \frac{1}{2}\qty(\rho_{01}+\rho_{11}).
\end{aligned}
\end{equation}
The distinguishability between $\rho_0^{(y)}$ and $\rho_1^{(y)}$ is
\begin{equation}
  \Delta_y:=\Delta(P_0^{(y)}, P_1^{(y)}) =\max_{\qty{E_{b|y}}} \ \frac{1}{2}\sum_b \abs{\Tr[E_{b|y}\qty(\rho_0^{(y)}-\rho_1^{(y)})]}. 
\end{equation}
For any pair of quantum states, the optimal measurement is the Helstrom measurement \cite{Watrous2018book}. Thus, the preparation distinguishability becomes \cite{Watrous2018book}
\begin{equation}\label{pdqtn}
 \Delta_y= \Delta(P_0^{(y)}, P_1^{(y)}) = \frac{1}{2}\norm{\rho_0^{(y)}-\rho_1^{(y)}}_1,
\end{equation}
where $\norm{\cdot}_1=\Tr[\abs{\cdot}]=\Tr[\sqrt{(\cdot)^{\dagger}(\cdot)}]$ denotes the trace norm.

\begin{thm}\label{thpdbq}
    For any set of qubit preparations $\qty{\rho_{x_1x_2}}$, the associated distinguishabilities satisfy
    \begin{equation}
        \Delta_1^2+\Delta_2^2 \leq 1.
        \label{quantumbound}
    \end{equation}
\end{thm}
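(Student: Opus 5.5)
We would work directly in the Bloch representation. We write each preparation as $\rho_{x_1x_2}=\frac{1}{2}(\mathbb{1}+\vec{r}_{x_1x_2}\cdot\vec{\sigma})$ with $\abs{\vec{r}_{x_1x_2}}\leq 1$. For any two qubit states with Bloch vectors $\vec{a},\vec{b}$ one has $\frac{1}{2}\norm{\rho_a-\rho_b}_1=\frac{1}{2}\abs{\vec{a}-\vec{b}}$, because $\rho_a-\rho_b=\frac{1}{2}(\vec{a}-\vec{b})\cdot\vec{\sigma}$ has eigenvalues $\pm\frac{1}{2}\abs{\vec{a}-\vec{b}}$. Applying this to Eq.~\eqref{pdqtn} with the marginals of Eq.~\eqref{alicepr} gives
\begin{equation*}
\Delta_1=\frac{1}{4}\abs{\vec{r}_{00}+\vec{r}_{01}-\vec{r}_{10}-\vec{r}_{11}},\qquad
\Delta_2=\frac{1}{4}\abs{\vec{r}_{00}+\vec{r}_{10}-\vec{r}_{01}-\vec{r}_{11}}.
\end{equation*}

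Next we introduce $\vec{u}=\vec{r}_{00}-\vec{r}_{11}$ and $\vec{v}=\vec{r}_{01}-\vec{r}_{10}$. Then $4\Delta_1=\abs{\vec{u}+\vec{v}}$ and $4\Delta_2=\abs{\vec{u}-\vec{v}}$. The parallelogram law gives
\begin{equation*}
16(\Delta_1^2+\Delta_2^2)=2\abs{\vec{u}}^2+2\abs{\vec{v}}^2 .
\end{equation*}
Since all Bloch vectors lie in the unit ball, $\abs{\vec{u}}\leq\abs{\vec{r}_{00}}+\abs{\vec{r}_{11}}\leq 2$, and likewise $\abs{\vec{v}}\leq 2$. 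Hence $16(\Delta_1^2+\Delta_2^2)\leq 16$, which is the claimed bound \eqref{quantumbound}.

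The only step needing care is the trace-distance formula for qubits, which is routine. The bound is then tight: taking $\vec{r}_{00}=-\vec{r}_{11}$ and $\vec{r}_{01}=-\vec{r}_{10}$ as unit vectors makes the inequality an equality. Choosing them orthogonal reproduces the standard optimal quantum RAC with $\Delta_1=\Delta_2=1/\sqrt{2}$. There is no real obstacle beyond noticing the parallelogram-law regrouping into $\vec{u}$ and $\vec{v}$.
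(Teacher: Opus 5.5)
Your proof is correct and follows the same route as the paper: your $\vec{u},\vec{v}$ are the Bloch vectors of the paper's $S=\rho_{00}-\rho_{11}$ and $T=\rho_{01}-\rho_{10}$, and both arguments apply the parallelogram identity to $S\pm T$ and then bound the size of $S$ and $T$ by $2$ (in the appropriate norm). The one difference is that in the Bloch picture the paper's inequality $\norm{X}_1^2\leq \mathrm{rank}(X)\norm{X}_2^2$ holds with equality for the traceless qubit operators $S\pm T$, so the only slack left is in $\abs{\vec{u}},\abs{\vec{v}}\leq 2$, which makes the tightness conditions you state immediate.
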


\begin{proof}
Let $S=\rho_{00}-\rho_{11}$ and $T=\rho_{01}-\rho_{10}$. Then
\begin{equation}
\Delta_1^2+\Delta_2^2=\frac{1}{16}\!\left(\|S{+}T\|_1^2+\|S{-}T\|_1^2\right)
\le \frac{R}{16}\!\left(\|S{+}T\|_2^2+\|S{-}T\|_2^2\right), \nonumber
\end{equation}
where we used $\|X\|_1^2 \le \mathrm{rank}(X)\|X\|_2^2$ \cite{Watrous2018book}, with $\norm{X}_2$ being the Schatten-2 norm, $R=\max\{\mathrm{rank}(S\pm T)\}\le 2$ for qubit preparations. Applying the parallelogram identity for the Schatten-2 norm leads to (\ref{quantumbound}). Further details are provided in Appendix~\ref{apqbprd}.

\end{proof} 

\begin{Corollary}
For any set of qubit preparations $\qty{\rho_{x_1x_2}}$, 
    \begin{equation}
        \max_{\rho_{x_1x_2}} \Delta_1+\Delta_2=\sqrt{2}, \ \ \text{subject to } \Delta_1^2+\Delta_2^2 \leq 1. \label{optdeltam}
    \end{equation}
Consequently, the optimal quantum success probability is
\begin{equation}\label{optquantumavg}
    \qty(\mathcal{P}_{\rm avg}^{(Q)})_{\rm opt} =\frac{1}{2}\qty(1+\frac{1}{\sqrt{2}}).
\end{equation}
\end{Corollary}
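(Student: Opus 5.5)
The corollary has two parts: a sharp upper bound on $\Delta_1+\Delta_2$, and an explicit construction showing the bound is attained. The success-probability formula then follows from the Proposition.

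\emph{Upper bound.} Theorem~\ref{thpdbq} gives $\Delta_1^2+\Delta_2^2\le 1$ for every qubit preparation set. The Cauchy--Schwarz inequality (equivalently, the power-mean inequality) gives
\begin{equation}
\Delta_1+\Delta_2\le\sqrt{2}\,\sqrt{\Delta_1^2+\Delta_2^2}\le\sqrt{2}.
\end{equation}
Equality in Cauchy--Schwarz requires $\Delta_1=\Delta_2$, and together with the theorem this forces $\Delta_1=\Delta_2=1/\sqrt{2}$. That pins down the target for the construction.

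\emph{Achievability.} I would use the standard Bloch-vector encoding $\rho_{x_1x_2}=\tfrac12\big(\mathbb{1}+\vec r_{x_1x_2}\cdot\vec\sigma\big)$ with
\begin{equation}
\vec r_{x_1x_2}=\tfrac{1}{\sqrt2}\big((-1)^{x_1},(-1)^{x_2},0\big).
\end{equation}
The marginal ensembles are then
\begin{equation}
\rho^{(1)}_{x_1}=\tfrac12\big(\mathbb{1}+(-1)^{x_1}\tfrac{1}{\sqrt2}\sigma_x\big),\qquad
\rho^{(2)}_{x_2}=\tfrac12\big(\mathbb{1}+(-1)^{x_2}\tfrac{1}{\sqrt2}\sigma_y\big).
\end{equation}
Hence $\rho^{(1)}_0-\rho^{(1)}_1=\tfrac{1}{\sqrt2}\sigma_x$, whose eigenvalues are $\pm 1/\sqrt2$. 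By Eq.~\eqref{pdqtn}, this gives $\Delta_1=\tfrac12\cdot\tfrac{2}{\sqrt2}=1/\sqrt2$, and the same computation gives $\Delta_2=1/\sqrt2$. So $\Delta_1+\Delta_2=\sqrt2$ is attained, and this establishes Eq.~\eqref{optdeltam}.

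\emph{Success probability.} Substituting $\Delta_1+\Delta_2\le\sqrt2$ into Eq.~\eqref{avgsuccfmp} gives
\begin{equation}
\mathcal{P}^{(Q)}_{\rm avg}\le \tfrac12+\tfrac{\sqrt2}{4}=\tfrac12\big(1+\tfrac{1}{\sqrt2}\big).
\end{equation}
The Proposition says this bound is saturated when Bob performs the Helstrom measurements. Here these are the projective measurements of $\sigma_x$ for $y=1$ and $\sigma_y$ for $y=2$, with outcome $b$ assigned to the eigenvalue $(-1)^b$.

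As a sanity check, I would compute $\mathrm{Tr}[\rho_x E_{b=x_y|y}]=\tfrac12\big(1+\tfrac1{\sqrt2}\big)$ for every $x$ and $y$, which reproduces Eq.~\eqref{optquantumavg} directly.

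The only delicate point is the saturation: the decoding measurement must be optimal for both marginal discrimination tasks simultaneously. The explicit check above makes that point immediate, so I expect no real obstacle. Everything nontrivial is already contained in Theorem~\ref{thpdbq}.
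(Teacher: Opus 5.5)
Your proof is correct and follows the same route as the paper. The paper applies Cauchy--Schwarz to $(\Delta_1,\Delta_2)$ and $(1,1)$, combines it with Theorem~\ref{thpdbq}, and then substitutes into the bound of Eq.~\eqref{avgsuccfmp}; your explicit construction (four pure states at the vertices of a square in the equatorial plane) supplies the attainability step that the paper's proof leaves implicit and only mentions in Appendix~\ref{apqbprd}.
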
 

\begin{proof}
    Applying the Cauchy-Schwarz inequality to $\qty(\Delta_1,\Delta_2)$, and $(1,1)$ gives $
\qty(\Delta_1+\Delta_2 )^2 \leq 2\qty(\Delta_1^2+\Delta_2^2 )$,
which implies Eq.~(\ref{optdeltam}). Eq.~(\ref{optquantumavg}) is achieved by substituting (\ref{optdeltam}) into 
\begin{equation}
    \qty(\mathcal{P}_{\rm avg}^{(Q)})_{\rm opt} =\frac{1}{2}+\frac{1}{4} \max_{\rho_{x_1x_2}}\qty(\Delta_1+\Delta_2), \ \text{subject to } \Delta_1^2+\Delta_2^2 \leq 1. \nonumber
\end{equation}

\end{proof}

\begin{Remark}
The set of quantum-achievable distinguishability pairs  satisfying $\Delta_1^2+\Delta_2^2 \leq 1$ forms a quarter unit disc, while classical/preparation-noncontextual models are restricted to the simplex $\Delta_1+\Delta_2 \leq 1$. The bound $\Delta_1+\Delta_2 \leq \sqrt{2}$, which follows from the optimal RAC performance, is not itself fundamental: not every pair $(\Delta_1, \Delta_2)$ satisfying $\Delta_1+\Delta_2 \in [0, \sqrt{2}]$ is quantum realisable. On the other hand, for each value of $\Delta_1+\Delta_2 \in [0, \sqrt{2}]$, there exists at least one pair of $(\Delta_1, \Delta_2)$ that is quantum realisable (see Appendix~\ref{apqbprd}).
\end{Remark}

%%%%%%%%%%%%%%%%%%%%%%%%%%%%%%%%%%%%%%%%%%%%%%%%%%%%%%%%%%%%%%%%%%%%%%%%%%%%%%%%%%%%%%%%%%%%%%%%%%%%%%%%%%%%%%%%%%%%%%%%%%%%%%%%%%%%%%%%%%%%%%%%%%%%%%%%%%%%%%%%%%%%%%%%%%%%%%%%%%%%%%%%%%%%%%%%%%%%%%%%%%%%%%%%%%%%%%%%%%%%%%%%%%%%%%%%%%%%%%%%%%%%%%%%%%%%%%%%%%%%%%%%%%%%%%%%%%%%

\section{Trade-off between preparation distinguishability, measurement incompatibility, and success probability}

Consider the case, where Bob performs dichotomic unbiased qubit POVMs (unsharp measurements) of the form \cite{Busch2016book}
\begin{equation}\label{bobm}
    \mathcal{B}_y\equiv \qty{ E_{b|y} = \frac{\openone + (-1)^b \lambda_y B_y}{2} }; \ \ \lambda_y \in [0,1],
\end{equation}
where $B_y$ is a Hermitian operator with eigenvalues $\pm 1$.

\begin{thm}
For the 2$\to$1 RAC with preparations $\qty{\rho_{x_1x_2}}$ and dichotomic unbiased POVMs of the form in Eq.~\eqref{bobm}, the average success probability satisfies
    \begin{equation}\label{aspbunsp}
        \mathcal{P}^{(Q)}_{\rm avg}\leq \frac{1}{2}+\frac{1}{4}\qty(\lambda_1 \Delta_1+\lambda_2 \Delta_2),
    \end{equation}
    where $\Delta_y$ denotes the preparation distinguishability of the marginal ensembles defined in Eq.~\eqref{pdqtn}. Equality will be achieved when $B_y$ is the Helstrom measurement corresponding to $\Delta_y$. Consequently, a quantum advantage is obtained whenever 
    \begin{equation}\label{qacunspd}
        \lambda_1 \Delta_1+\lambda_2 \Delta_2>1, \ \ \text{subject to } \Delta_1^2 + \Delta_2^2 \leq 1.
    \end{equation}
\end{thm}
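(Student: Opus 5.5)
The plan is to split the success probability in Eq.~\eqref{2bitsuccp} by Bob's input $y$, reducing each half to a binary discrimination between the marginal ensembles of Eq.~\eqref{alicepr}, and then bound each half by a Hölder-type inequality. For fixed $y$, summing over the bit $x_{\bar y}$ that is not asked gives
\begin{equation}
\frac{1}{8}\sum_{x}\Tr[\rho_x E_{b=x_y|y}] = \frac{1}{4}\sum_{b\in\{0,1\}}\Tr[\rho_b^{(y)}E_{b|y}]. \nonumber
\end{equation}
Substituting the unsharp POVM of Eq.~\eqref{bobm} and using $\Tr\rho_b^{(y)}=1$, this becomes
\begin{equation}
\frac{1}{4}+\frac{\lambda_y}{8}\Tr\!\qty[B_y\qty(\rho_0^{(y)}-\rho_1^{(y)})]. \nonumber
\end{equation}
Summing over $y\in\{1,2\}$ then yields
\begin{equation}
\mathcal{P}^{(Q)}_{\rm avg}=\frac{1}{2}+\frac{1}{8}\sum_y \lambda_y \Tr[B_y D_y], \qquad D_y:=\rho_0^{(y)}-\rho_1^{(y)}. \nonumber
\end{equation}

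Next I bound each trace term. Since $B_y$ is Hermitian with eigenvalues $\pm1$, its operator norm is $\norm{B_y}_\infty=1$. Hölder's inequality gives $\abs{\Tr[B_yD_y]}\le\norm{D_y}_1 = 2\Delta_y$ by Eq.~\eqref{pdqtn}. Inserting this, and using $\lambda_y\ge0$, gives exactly Eq.~\eqref{aspbunsp}. For the equality condition, write the Jordan decomposition $D_y=D_y^+-D_y^-$. Since $\Tr D_y=0$, both parts are nonzero whenever $D_y\neq0$. Choosing $B_y=\Pi_+-\Pi_-$, the projectors onto the positive and nonpositive eigenspaces of $D_y$, i.e.\ the Helstrom observable, gives $\Tr[B_yD_y]=\norm{D_y}_1$, so the bound is saturated. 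If $D_y=0$, any $B_y$ works trivially.

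For the advantage condition, compare with the classical/preparation-noncontextual value $3/4$ quoted after the Proposition. The bound of Eq.~\eqref{aspbunsp} is attained with Helstrom-aligned $B_y$, so it exceeds $3/4$ precisely when $\lambda_1\Delta_1+\lambda_2\Delta_2>1$. The constraint $\Delta_1^2+\Delta_2^2\le1$ is imported directly from Theorem~\ref{thpdbq}, since it holds for any qubit preparations.

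The only delicate point I expect is the direction of the logic in the advantage statement. Strictly, the inequality is a necessary condition for advantage with these POVMs, and it becomes sufficient once the $B_y$ are chosen as the Helstrom observables, which is the equality case above. I will state the result in that form.
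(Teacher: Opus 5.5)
Your proposal is correct and follows essentially the paper's route: the paper writes $\mathcal{P}^{(Q)}_{\rm avg}=\frac{1}{2}+\frac{1}{16}\sum_y\lambda_y\Tr[X_yB_y]$ with $X_y=S-(-1)^yT=2D_y$, and bounds $\Tr[X_yB_y]\le\norm{X_y}_1=4\Delta_y$ by maximising over $-\openone\le B_y\le\openone$, which is your H\"older step. Your explicit Helstrom construction, and your remark that $\lambda_1\Delta_1+\lambda_2\Delta_2>1$ is necessary in general but sufficient only for Helstrom-aligned $B_y$, make precise what the paper leaves implicit.
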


\begin{proof} 
It follows from (\ref{2bitsuccp}) that
\begin{equation}\label{2bitsuccp2}
    \mathcal{P}_{\rm avg}^{(Q)} =\frac{1}{2} + \frac{1}{16} \Big\{ \lambda_1 \Tr[\qty(S+T)B_1 ] +  \lambda_2 \Tr[\qty(S-T)B_2] \Big\}.
\end{equation}
Let $X_y:=(S-(-1)^y T)$. Maximising $\Tr[X_y B_y]$ over observables $-\openone \leq B \leq \openone$ yields $\max\Tr[X_y B_y] = \norm{X_y}_1 =4 \Delta_y$. Thus we obtain Eq.~\eqref{aspbunsp}, and hence, the condition for quantum advantage is given by Eq.~\eqref{qacunspd}.
\end{proof}

\begin{Lemma}\label{quantumadvantagelemma}
A quantum advantage $\lambda_1 \Delta_1+\lambda_2 \Delta_2>1$ is achievable if and only if Bob's measurement are incompatible, i.e. $\lambda_1^2+\lambda_2^2 >1$, provided $\qty{B_1,B_2}=0$. 
\end{Lemma}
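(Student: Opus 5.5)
The plan is to treat the statement as an optimisation over the quarter disc of Theorem~\ref{thpdbq}. For fixed $(\lambda_1,\lambda_2)$, the set of jointly attainable distinguishabilities satisfies $\Delta_1^2+\Delta_2^2\leq 1$ with $\Delta_y\geq 0$. We then ask whether some attainable pair makes $\lambda_1\Delta_1+\lambda_2\Delta_2>1$. Both directions reduce to the Cauchy--Schwarz inequality, combined with an explicit construction that saturates the disc in the direction of $(\lambda_1,\lambda_2)$.

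\emph{Necessity.} Suppose $\lambda_1\Delta_1+\lambda_2\Delta_2>1$. Cauchy--Schwarz applied to $(\lambda_1,\lambda_2)$ and $(\Delta_1,\Delta_2)$, together with Theorem~\ref{thpdbq}, gives
\begin{equation}
1<\lambda_1\Delta_1+\lambda_2\Delta_2\leq\sqrt{\lambda_1^2+\lambda_2^2}\,\sqrt{\Delta_1^2+\Delta_2^2}\leq\sqrt{\lambda_1^2+\lambda_2^2}. \nonumber
\end{equation}
Hence $\lambda_1^2+\lambda_2^2>1$. Taking the unsharpness parameters of the previous theorem to be the relevant ones, we then recall the known criterion of Busch: for $\qty{B_1,B_2}=0$, i.e.\ orthogonal Bloch directions, the unbiased POVMs $\mathcal{B}_1,\mathcal{B}_2$ are jointly measurable iff $\lambda_1^2+\lambda_2^2\leq 1$. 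This identifies the condition with incompatibility. The anticommutation assumption is exactly what makes this criterion take the stated form; for general angles it would involve the angle between the Bloch vectors.

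\emph{Sufficiency.} Assume $\lambda_1^2+\lambda_2^2>1$ and $\qty{B_1,B_2}=0$, so that without loss of generality $B_1=\sigma_x$ and $B_2=\sigma_z$. Set $(\Delta_1,\Delta_2)=(\lambda_1,\lambda_2)/\sqrt{\lambda_1^2+\lambda_2^2}$, which gives $\lambda_1\Delta_1+\lambda_2\Delta_2=\sqrt{\lambda_1^2+\lambda_2^2}>1$. We must show this pair is realised by qubit states whose Helstrom measurements are $\sigma_x$ and $\sigma_z$, so that equality holds in the previous theorem. The construction uses pure states with Bloch vectors
\begin{equation}
\vec r_{x_1x_2}=\bigl((-1)^{x_1}\cos\theta,\,0,\,(-1)^{x_2}\sin\theta\bigr). \nonumber
\end{equation}
Then $\rho^{(1)}_0-\rho^{(1)}_1=\cos\theta\,\sigma_x$ and $\rho^{(2)}_0-\rho^{(2)}_1=\sin\theta\,\sigma_z$. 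This yields $\Delta_1=\cos\theta$ and $\Delta_2=\sin\theta$, with Helstrom observables $\sigma_x$ and $\sigma_z$. Choosing $\tan\theta=\lambda_2/\lambda_1$ finishes the argument.

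The main obstacle is not the algebra but reconciling the roles of the variables. In the previous theorem, equality requires $B_y$ to be the Helstrom observable. Since $\qty{B_1,B_2}=0$, this forces the two Helstrom directions to be orthogonal, and we must check that the construction respects this while still saturating the disc. The explicit Bloch-vector family above is designed precisely for that, so I would verify it first.
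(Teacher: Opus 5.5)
Your proof is correct and follows the paper's own argument. Necessity uses Cauchy--Schwarz together with the disc bound $\Delta_1^2+\Delta_2^2\le 1$, plus the orthogonal-direction joint-measurability criterion to identify $\lambda_1^2+\lambda_2^2>1$ with incompatibility; sufficiency uses the choice $\Delta_y=\lambda_y/\sqrt{\lambda_1^2+\lambda_2^2}$. Your explicit family $\vec r_{x_1x_2}=((-1)^{x_1}\cos\theta,0,(-1)^{x_2}\sin\theta)$ is a worthwhile addition, since it checks what the paper leaves implicit: this boundary point of the disc is realisable with Helstrom observables exactly $B_1=\sigma_x$, $B_2=\sigma_z$, so the bound of the preceding theorem is actually attained.
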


\begin{proof}
Applying the Cauchy-Schwarz inequality gives
   \begin{equation}
       \lambda_1 \Delta_1+\lambda_2 \Delta_2 \leq \sqrt{\lambda_1^2+\lambda_2^2} \sqrt{\Delta_1^2 + \Delta_2^2}.
   \end{equation}
Since $\Delta_1^2 + \Delta_2^2 \leq 1$, it follows that $\lambda_1 \Delta_1+\lambda_2 \Delta_2 \leq \sqrt{\lambda_1^2+\lambda_2^2}$. Therefore, achieving a quantum advantage requires $\lambda_1^2+\lambda_2^2>1$. For $\qty{B_1,B_2}=0$, this condition is equivalent to measurement incompatibility \cite{Carmeli2012,Heinosaari2016}, thereby establishing necessity.

To prove sufficiency, consider preparations satisfying $\Delta_y=\lambda_y/\sqrt{\lambda^2_1+\lambda^2_2}$. Then $\Delta_1^2 + \Delta_2^2 = 1$, and $\lambda_1 \Delta_1+\lambda_2 \Delta_2 = \sqrt{\lambda_1^2+\lambda_2^2}$. Hence, whenever $\lambda_1^2+\lambda_2^2>1$, a quantum advantage follows immediately.
\end{proof}
We now consider the symmetric scenario in which $\lambda_1=\lambda_2=\lambda^{(s)}$. The threshold value of the unsharpness parameter for quantum advantage is
\begin{equation}
    \lambda^{(s)} > \frac{1}{\Delta_1+\Delta_2} \geq \lambda_c^{(s)} =  \frac{1}{\Delta_1+\sqrt{1-\Delta_1^2}}.
\end{equation}
The minimal threshold corresponds to $\max(\Delta_1+\Delta_2)=\sqrt{2}$, subject to $\Delta_1^2+\Delta_2^2 \leq 1$, yielding $\min \lambda_c^{(s)}=1/\sqrt{2}$. 

We next consider the asymmetric scenario, $\lambda_1=1$ and $\lambda_2=\lambda^{(\rm as)}$. The condition for quantum advantage becomes
\begin{equation} \label{lamsingcr}
  \lambda^{(\rm as)} > \frac{1-\Delta_1}{\Delta_2} \geq \lambda^{(\rm as)}_c = \sqrt{\frac{1-\Delta_1}{1+\Delta_1}}.
\end{equation}
Here, $\lambda^{(\rm as)}_c$ is a monotonically decreasing function of $\Delta_1$. Unlike the symmetric case, the threshold $\lambda^{(\rm as)}_c$ is not minimised at $\Delta_1=\Delta_2=1/\sqrt{2}$. Instead, in the limit $\Delta_1\to 1$ (and hence $\Delta_2 \to 0$), the threshold approaches zero, $\lambda^{(\rm as)}_c \to 0$. This shows that even when the preparation distinguishabilities approach the classical boundary, $\Delta_1 + \Delta_2 \to 1$, nearly compatible measurements can still yield a quantum advantage, thereby opening a new possibility for sequential RACs. The details are presented in Fig.~\ref{figsasyvarto} and Appendix \ref{regionunsharpfigures}.

\begin{figure}[t!]
\includegraphics[width=\linewidth]{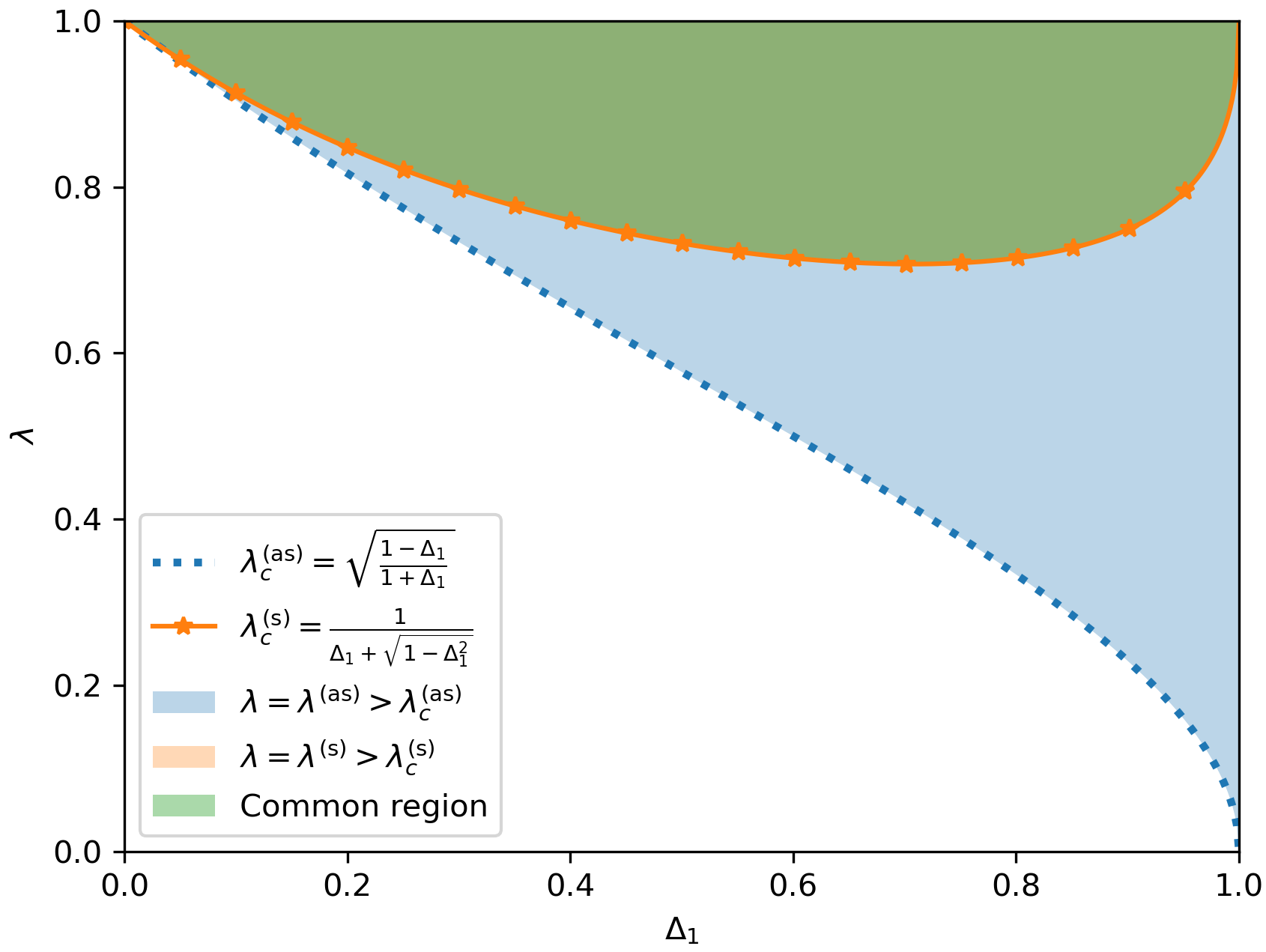}
\caption{Comparison of the symmetric and asymmetric thresholds of the unsharpness parameter as functions of $\Delta_1 \leq \sqrt{1-\Delta_2^2}$. The shaded regions indicate where a quantum advantage is obtained.}
\label{figsasyvarto}
\end{figure}

%%%%%%%%%%%%%%%%%%%%%%%%%%%%%%%%%%%%%%%%%%%%%%%%%%%%%%%%%%%%%%%%%%%%%%%%%%%%%%%%%%%%%%%%%%%%%%%%%%%%%%%%%%%%%%%%%%%%%%%%%%%%%%%%%%%%%%%%%%%%%%%%%%%%%%%%%%%%%%%%%%%%%%%%%%%%%%%%%%%%%%%%%%%%%%%%%%%%

\section{Sequential RAC and quantum advantage for unbounded number of receivers}

Consider the sequential RAC scenario illustrated in Fig.~\ref{seq}. Each Bob$^{(k)}$ performs one of the following two measurements: a projective measurement of $B_1$, with projectors $B_{\pm 1|1}^{(k)} = \frac{1}{2}\qty(\openone \pm B_1)$, and an unsharp measurement of $B_2$, with POVM elements $E^{(k)}_{\pm} = \frac{1}{2}\qty(\openone\pm \lambda_k B_2)$, with unsharpness parameter $\lambda_k \in \qty[0,1]$. Considering the same  $B_1$, $B_2$ for all Bobs serves our purpose while simplifying the calculations.

Let $\rho_{x_1x_2}^{(k)}$ denote the state received by Bob$^{(k)}$. These states evolve recursively according to  $\rho_{x_1x_2}^{(k)}=\Phi^{(k-1)}(\rho_{x_1x_2}^{(k-1)})$, where $\Phi^{(k-1)}$ is the quantum channel induced by the measurements of Bob$^{(k-1)}$. Since measurement outcomes are not communicated to subsequent receivers, $\Phi^{(k)}$ corresponds to the non-selective implementation of the measurements by the preceding Bob
\begin{equation}
   \Phi^{(k)}\qty(\rho_{x_1x_2}^{k-1})= \frac{1}{2}\sum_{b \in \qty{\pm}}\qty(B_{b|1} \rho^{k-1}_{x_1x_2} B_{b|1}+ K^{(k)}_{b} \rho^{k-1}_{x_1x_2} K^{(k)}_{b}),
\end{equation}
where $K^{(k)}_{b} = \sqrt{E^k_{b}}$ are the Kraus operators corresponding to the Luder's instrument associated with the unsharp POVM \cite{Busch2016book}, and depend on $\lambda_k$ (see Appendix~\ref{apcapd} for explicit expressions). 

 Upon receiving the input $y_k \in \{1,2\}$, Bob$^{(k)}$ seeks to discriminate between the two marginal ensembles $\qty{\rho_{0}^{(y,k)},\rho_{1}^{(y,k)}}$, with $y=y_k$, where $\rho_{a}^{(y,k)}$ for all $a \in \{0,1\}$ and $y \in \{1,2\}$, is defined by Eq.(\ref{alicepr}) with $\rho_{x_1x_2}$  replaced by  $\rho^{(k)}_{x_1x_2}$. 

\begin{Lemma}
Let $B_{1}$ and $B_2$ be the Helstrom measurements used by Bob$^{(1)}$ for discriminating the pairs $\qty{\rho_{0}^{(y,1)},\rho_{1}^{(y,1)}}$ for $y=1$ and $y=2$, respectively,  and suppose that $\qty{B_1,B_2}=0$. Then the same observables $B_{1}$ and $B_2$ remain the Helstrom measurements for discriminating the pairs $\qty{\rho_{0}^{(y,k)},\rho_{1}^{(y,k)}}$ for $y=1$ and $y=2$, respectively, for all $k$, where 
 \begin{equation}
   \rho_{x}^{(y,k)}=\Phi^{(k-1)}\qty(\rho_{x}^{(y,k-1)}), \ \ \forall x \in \qty{0,1}, \ y\in\qty{1,2}.
 \end{equation}
Consequently, the preparation distinguishability can be expressed as
 \begin{equation} \label{pdktnpre}
     \Delta_{y}^{(k)}=\frac{1}{2}\norm{\rho_{0}^{(y,k)} -\rho_{1}^{(y,k)}}_1  \ \ \forall  y \in \{1,2\}.
 \end{equation}
\end{Lemma}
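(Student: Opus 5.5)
We work with the difference operators $D_y^{(k)} := \rho_0^{(y,k)}-\rho_1^{(y,k)}$, which are traceless Hermitian qubit operators. For such an operator the Helstrom measurement is the projective measurement of the sign of $D_y^{(k)}$, i.e. of the observable $\hat n\cdot\vec\sigma$ with $\hat n$ the direction of the Bloch vector of $D_y^{(k)}$. Thus it suffices to show that $D_y^{(k)}$ stays proportional to $B_y$ with a nonnegative coefficient for every $k$. Since $\{B_1,B_2\}=0$ and both are $\pm1$-valued qubit observables, we may choose a basis in which $B_1=\sigma_z$ and $B_2=\sigma_x$. The hypothesis that $B_y$ is Helstrom at $k=1$ gives
\begin{equation}
D_y^{(1)} = c_y^{(1)} B_y + (\text{possibly other components}),
\end{equation}
and it is cleanest to first show that the Helstrom condition forces $D_y^{(1)} = \Delta_y^{(1)} B_y$ exactly. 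Indeed, $\Tr[D_y B_y]=\norm{D_y}_1$ with $D_y$ traceless and $B_y$ a unit Bloch direction holds iff the Bloch vector of $D_y$ is parallel to that of $B_y$, and then $D_y^{(1)}=\Delta_y^{(1)}B_y$ because $\frac12\norm{\cdot}_1$ of $c\,\hat n\cdot\vec\sigma$ equals $|c|$.

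Next we use linearity: $\Phi^{(k)}$ is linear, so $D_y^{(k+1)}=\Phi^{(k)}(D_y^{(k)})$. We then compute the action of $\Phi^{(k)}$ on $\sigma_x,\sigma_z$ (and on $\openone,\sigma_y$ for completeness). For the projective branch, $\sum_b B_{b|1}XB_{b|1}$ is dephasing in the $\sigma_z$ basis: it fixes $\openone$ and $\sigma_z$ and kills $\sigma_x,\sigma_y$. For the Lüders branch, writing $K_\pm=\alpha_k\openone\pm\beta_k\sigma_x$ with $\alpha_k=\tfrac12(\sqrt{1+\lambda_k}+\sqrt{1-\lambda_k})$ and $\beta_k=\tfrac12(\sqrt{1+\lambda_k}-\sqrt{1-\lambda_k})$, one finds that $\sum_b K_bXK_b$ fixes $\openone$ and $\sigma_x$ and multiplies $\sigma_y,\sigma_z$ by $\alpha_k^2-\beta_k^2=\sqrt{1-\lambda_k^2}$. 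Averaging with the factor $\tfrac12$ gives
\begin{equation}
\Phi^{(k)}(\sigma_z)=\tfrac12\bigl(1+\sqrt{1-\lambda_k^2}\bigr)\sigma_z,\qquad \Phi^{(k)}(\sigma_x)=\tfrac12\,\sigma_x ,
\end{equation}
so both are eigen-operators with nonnegative eigenvalues.

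By induction on $k$, $D_y^{(k)}=\Delta_y^{(k)}B_y$ with $\Delta_1^{(k)}=\Delta_1^{(1)}\prod_{j<k}\tfrac12\bigl(1+\sqrt{1-\lambda_j^2}\bigr)$ and $\Delta_2^{(k)}=\Delta_2^{(1)}2^{-(k-1)}$, both nonnegative. Hence $\Tr[D_y^{(k)}B_y]=\norm{D_y^{(k)}}_1$, so $B_y$ attains the optimum of the Helstrom problem, which via the relations from Section~\ref{21RACoptq} gives Eq.~\eqref{pdktnpre}. The main obstacle is the first step: one must check that the Helstrom optimality of $B_y$ really pins $D_y^{(1)}$ to be parallel to $B_y$, since a degenerate case with $D_y^{(1)}=0$ is trivially fine. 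One must also avoid sign reversals, so we should verify that each eigenvalue of $\Phi^{(k)}$ is nonnegative for all $\lambda_k\in[0,1]$, which the explicit formulas show.
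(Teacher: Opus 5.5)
Your proposal is correct and follows essentially the paper's route: linearity of $\Phi^{(k)}$, the eigen-operator relations $\Phi^{(k)}(B_1)=\tfrac12\bigl(1+\sqrt{1-\lambda_k^2}\bigr)B_1$ and $\Phi^{(k)}(B_2)=\tfrac12 B_2$ under $\{B_1,B_2\}=0$, and induction on $k$. You obtain these relations in a Pauli basis rather than via the paper's identity $B_iB_yB_i=\{B_i,B_y\}B_i-B_y$, and your sign check and derivation of $D_y^{(1)}=\Delta_y^{(1)}B_y$ are small refinements of what the paper simply assumes. One cosmetic slip: with your $\alpha_k,\beta_k$ one gets $K_\pm^2=\openone\pm\lambda_k\sigma_x=2E_\pm$, so the roots should be $\sqrt{(1\pm\lambda_k)/2}$, as in the paper. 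With that normalization the L\"uders branch fixes $\openone$ and contracts $\sigma_z$ by $2(\alpha_k^2-\beta_k^2)=\sqrt{1-\lambda_k^2}$, and your stated channel action is already correct.
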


\begin{proof}
Using the action of the channel $\Phi^{(k-1)}$, together with $\qty{B_1,B_2}=0$, one obtains
\begin{equation}
 \Phi^{(k-1)}(B_1) =\frac{1}{2}\qty(1+\sqrt{1-\lambda_{k-1}^2}) B_1, \ \ \Phi^{(k-1)}(B_2) =\frac{1}{2}B_2.
\end{equation}
 Hence, if $\rho_{0}^{(y,k-1)} -\rho_{1}^{(y,k-1)} \propto B_y$, then $\rho_{0}^{(y,k)} -\rho_{1}^{(y,k)} = \Phi^{(k-1)}\qty(\rho_{0}^{(y,k-1)} -\rho_{1}^{(y,k-1)}) \propto \Phi^{(k-1)}\qty(B_y) \propto B_y$. Assuming $\rho_{0}^{(y,1)} -\rho_{1}^{(y,1)} \propto B_y$, the result follows by induction. See Appendix~\ref{apcapd} for its details and the proof of the following corollary. 
\end{proof}

\begin{Corollary}
    For the above measurement scheme, and for anticommuting observables $\qty{B_1,B_2}=0$, the preparation distinguishabilities evolve under the action of Bob$^{(k-1)}$ as
    \begin{equation}\label{deltainductmain}
  \Delta_1^{(k)}=\frac{1}{2}\qty(1+\sqrt{1-\lambda_{k-1}^2})\Delta_1^{(k-1)}, \quad      \Delta_2^{(k)}=\frac{1}{2}\Delta_2^{(k-1)}. 
    \end{equation}
\end{Corollary}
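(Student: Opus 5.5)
We will combine the preceding Lemma with the trace-norm formula in Eq.~\eqref{pdktnpre}. The route is to show that the difference operators $D_y^{(k)}:=\rho_0^{(y,k)}-\rho_1^{(y,k)}$ are proportional to $B_y$ with an explicit coefficient, and then to track how that coefficient changes under one application of the channel $\Phi^{(k-1)}$.

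\textbf{Step 1 (form of the differences).} By hypothesis, $B_y$ is the Helstrom measurement for the pair at $k=1$, and the proof of the Lemma shows the induction is consistent with $D_y^{(1)}\propto B_y$. Because $D_y^{(1)}$ is traceless and Hermitian on a qubit, and its positive part is the projector $(\openone+B_y)/2$, we can write
\begin{equation}
D_y^{(1)}=\tfrac{1}{2}c_y^{(1)}B_y,\qquad c_y^{(1)}\geq 0 .
\end{equation}
The Lemma then gives $D_y^{(k)}=\tfrac12 c_y^{(k)}B_y$ for every $k$. Since $\norm{B_y}_1=2$, Eq.~\eqref{pdktnpre} yields $\Delta_y^{(k)}=\tfrac12 c_y^{(k)}$. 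In particular $\Delta_y^{(k)}$ is linear in the coefficient $c_y^{(k)}$, provided that coefficient stays nonnegative.

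\textbf{Step 2 (action of the channel on $B_1,B_2$).} This is the only real computation, and we expect it to be the main obstacle.

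For $B_1$: conjugating by the projectors $B_{\pm|1}$ leaves $B_1$ invariant. For the Lüders Kraus operators $K_\pm=\sqrt{(\openone\pm\lambda B_2)/2}$, we expand them as $K_\pm=\alpha_+ \openone\pm\alpha_- B_2$, with
\begin{equation}
\alpha_\pm=\tfrac12\qty(\sqrt{\tfrac{1+\lambda}{2}}\pm\sqrt{\tfrac{1-\lambda}{2}}).
\end{equation}
Using $B_2B_1B_2=-B_1$, which follows from anticommutation, we get $\sum_\pm K_\pm B_1K_\pm=2(\alpha_+^2-\alpha_-^2)B_1=\sqrt{1-\lambda^2}\,B_1$. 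Including the factor $\tfrac12$ from the uniform choice of $y_{k-1}$ gives
\begin{equation}
\Phi^{(k-1)}(B_1)=\tfrac12\qty(1+\sqrt{1-\lambda_{k-1}^2})B_1 .
\end{equation}

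For $B_2$: the Kraus operators commute with $B_2$, so that branch returns $B_2$. The projective branch gives $\sum_\pm B_{\pm|1}B_2B_{\pm|1}=0$, because $B_1B_2B_1=-B_2$. Hence $\Phi^{(k-1)}(B_2)=\tfrac12B_2$.

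The main things to verify carefully are the Kraus decomposition and that $\Phi^{(k-1)}$ is applied linearly to the differences. Linearity holds because $\Phi^{(k-1)}$ is linear and each marginal is an equal mixture.

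\textbf{Step 3 (conclusion).} By linearity,
\begin{equation}
D_y^{(k)}=\Phi^{(k-1)}\qty(D_y^{(k-1)})=\tfrac12 c_y^{(k-1)}\,\Phi^{(k-1)}(B_y).
\end{equation}
This gives $c_1^{(k)}=\tfrac12\qty(1+\sqrt{1-\lambda_{k-1}^2})\,c_1^{(k-1)}$ and $c_2^{(k)}=\tfrac12 c_2^{(k-1)}$. Both multipliers are nonnegative, so $c_y^{(k)}\geq 0$ is preserved and Step 1 applies. Substituting $\Delta_y=\tfrac12 c_y$ yields Eq.~\eqref{deltainductmain}.
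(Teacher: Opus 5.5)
Your proposal is correct and follows essentially the same route as the paper. In both, $\rho_0^{(y,k-1)}-\rho_1^{(y,k-1)}\propto B_y$ is propagated by linearity of $\Phi^{(k-1)}$, and the Kraus form $K_\pm=\alpha\openone\pm\beta B_2$ together with $\{B_1,B_2\}=0$ gives $\Phi^{(k-1)}(B_1)=\frac{1}{2}(1+\sqrt{1-\lambda_{k-1}^2})B_1$ and $\Phi^{(k-1)}(B_2)=\frac{1}{2}B_2$, after which trace norms yield the recursion. The differences are cosmetic: you fix a nonnegative proportionality constant where the paper uses $\abs{s_y^{(k-1)}}$, and you evaluate $\norm{B_y}_1=2$ explicitly where the paper takes a ratio.
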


The corresponding average success probability is therefore
\begin{equation}
\label{unsharpsucprob}
  \mathcal{P}^{(k)}_{\rm avg}=\frac{1}{2}+\frac{1}{4}\qty(\Delta_1^{(k)}+\lambda_k \Delta_2^{(k)}),  
\end{equation}
and depends both on Alice’s preparation strategy and on the sequence of unsharpness parameters $\qty{\lambda_l}_{l=1}^k$. 

\begin{thm}
Consider the sequential 2$\to$1 RAC protocol with dichotomic qubit observables satisfying $\qty{B_1,B_2}=0$, and initial preparation distinguishabilities 
\begin{equation} \label{distpar}
\qty(\Delta_1^{(1)},\Delta_2^{(1)})=\qty(\cos\omega,r\sin\omega), \quad r \leq 1, \ \ \omega\in (0, \pi/2)
\end{equation}
Then, for any integer $N\in \mathbb{Z}^+$, there exists a sequence of unsharpness parameters 
\begin{equation}
0<\lambda_1<\lambda_2<\cdots<\lambda_k<\cdots < \lambda_N<1, 
\label{lambdamonotonic}
\end{equation}
such that all $N$ independent sequential receivers achieve quantum advantages.
\end{thm}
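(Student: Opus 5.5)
The plan is to solve the recursion of the Corollary explicitly and then choose Alice's initial angle $\omega$ small enough, depending on $N$, that a greedy choice of each $\lambda_k$ just above its advantage threshold stays below $1$ for all $N$ receivers. One point of interpretation first. For a \emph{fixed} $\omega$, $\Delta_2^{(k)}$ halves at every step while $\Delta_1^{(k)}\le\cos\omega<1$. Hence $\Delta_1^{(k)}+\Delta_2^{(k)}$ eventually falls below $1$, and no receiver beyond that point can have an advantage. So I read the statement as: for every $N$ there exist $\omega\in(0,\pi/2)$ (with $r$ fixed, say $r=1$) and a sequence $\lambda_1<\dots<\lambda_N$ that work.

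\emph{Step 1 (closed form).} Iterating Eq.~\eqref{deltainductmain} gives
\begin{equation}
\Delta_1^{(k)}=\cos\omega\prod_{l=1}^{k-1}\frac{1+\sqrt{1-\lambda_l^2}}{2},\qquad \Delta_2^{(k)}=\frac{r\sin\omega}{2^{k-1}} .\nonumber
\end{equation}
By Eq.~\eqref{unsharpsucprob}, Bob$^{(k)}$ has a quantum advantage if and only if $\lambda_k>t_k:=(1-\Delta_1^{(k)})/\Delta_2^{(k)}$. This is Eq.~\eqref{lamsingcr} applied to the current distinguishabilities. Note that $t_k$ depends only on $\omega$ and on $\lambda_1,\dots,\lambda_{k-1}$.

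\emph{Step 2 (recursive choice).} Set $\lambda_k:=\min\{2t_k,\,(1+t_k)/2\}$, which lies strictly between $t_k$ and $1$ whenever $0<t_k<1$. Positivity of $t_k$ is automatic, since $\Delta_1^{(k)}\le\cos\omega<1$. Monotonicity follows because, as $k$ increases, $1-\Delta_1^{(k)}$ does not decrease and $\Delta_2^{(k)}$ halves, so $t_{k+1}\ge 2t_k$. Then $\lambda_k\le 2t_k\le t_{k+1}<\lambda_{k+1}$, which gives the strict ordering in Eq.~\eqref{lambdamonotonic}. The entire proof therefore reduces to showing $t_N<1$, because $t_k\le t_N$ for all $k\le N$.

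\emph{Step 3 (main obstacle: the estimate $t_N<1$).} This is where the feedback between receivers must be controlled. The damping of $\Delta_1$ caused by Bob$^{(l)}$ depends on $\lambda_l$, and $\lambda_l$ is in turn chosen from $t_l$. I will use $\frac{1+\sqrt{1-\lambda^2}}{2}\ge 1-\frac{\lambda^2}{2}$ together with $\prod(1-a_l)\ge 1-\sum a_l$ to get
\begin{equation}
1-\Delta_1^{(k)}\le (1-\cos\omega)+\tfrac12\sum_{l<k}\lambda_l^2\le \tfrac{\omega^2}{2}+2\sum_{l<k}t_l^2 .\nonumber
\end{equation}
Dividing by $\Delta_2^{(k)}=r\sin\omega/2^{k-1}\ge 2r\omega/(\pi 2^{k-1})$ gives a recursive inequality of the form $t_k\le C\,2^{k}\big(\omega+\omega^{-1}\sum_{l<k}t_l^2\big)/r$. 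I will then prove by induction the ansatz $t_k\le c_k\,\omega/r$, where the constants $c_k$ depend only on $k$ (growing roughly like $8^k$). Substituting the ansatz, the quadratic term is $\omega^{-1}\cdot O(\omega^2)=O(\omega)$, so the induction closes at order $\omega$. Choosing $\omega<r/c_N$ then yields $t_N<1$, and hence all $N$ receivers satisfy $\Delta_1^{(k)}+\lambda_k\Delta_2^{(k)}>1$. The delicate part is exactly this step: one has to check that the quadratic feedback term $\sum_{l<k} t_l^2$ stays of order $\omega^2$, so that the induction closes, and one has to track the $N$-dependence of the constants honestly.
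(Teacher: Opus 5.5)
Your proposal is correct and follows essentially the paper's route. You pick each $\lambda_k$ just above the threshold $t_k=(1-\Delta_1^{(k)})/\Delta_2^{(k)}$ (your $\min\{2t_k,(1+t_k)/2\}$ in place of the paper's $(1+\epsilon)t_k$), get strict monotonicity from the halving of $\Delta_2^{(k)}$, and close an induction showing the thresholds are $O(\omega)$ because the feedback term $\omega^{-1}\sum_{l<k}\lambda_l^2$ is itself $O(\omega)$, with $\omega$ taken small depending on $N$ exactly as in the paper's $\omega\to 0^+$; your explicit inequalities ($\sqrt{u}\ge u$, $\prod_l(1-a_l)\ge 1-\sum_l a_l$, $1-\cos\omega\le\omega^2/2$, $\sin\omega\ge 2\omega/\pi$) make rigorous what the paper does with small-angle expansions.

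One side remark needs correcting: your recursion gives $c_k\ge C\,2^k c_{k-1}^2/r^2$, so the constants grow doubly exponentially in $k$ (as does the paper's $c^{(1)}_k$) and also depend on $r$, rather than growing like $8^k$; since only finiteness of $c_N$ is used, the proof is unaffected.
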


\begin{proof}
A sufficient condition for guaranteeing quantum advantage for all Bob$^{(k)}$, with $k=\in\qty{1,2,\ldots,N}$, follows from
Eqs.~\eqref{deltainductmain} and \eqref{lamsingcr}
    \begin{equation}\label{lambklb}
\lambda_k > \frac{2^{k-1} -\cos\omega \prod_{l=1}^{k-1}\qty(1+\sqrt{1-\lambda_{l}^2})}{r\sin\omega} \ \ \forall k.
    \end{equation}
We consider a sequence $\{\lambda_k\}_{k=1}^N$ satisfying Eq.~(\ref{lambklb}). For any $r\in(0,1]$, we show that the interval $\omega\in(0,\pi/2)$ ensures  $\qty{0<\lambda_1<\cdots<\lambda_k}_{k=1}^N$ if $\lambda_k<1 \ \forall k$.  Subsequently, using the small-angle approximation, we show that $\lambda_k = \order{\omega}$. Therefore, $\omega\to 0^+$ guarantees $\lambda_k<1 \ \forall k$. See Appendix~\ref{apsharecal} for details.
\end{proof}

\begin{Remark}
 Note that when $\omega \to 0^+$, we have $\lambda_k \to 0$ for all $k$ satisfying Eq.(\ref{lambdamonotonic}). Consequently, from Eqs.(\ref{deltainductmain}) and (\ref{distpar}), for all $k$, $\qty(\Delta_1^{(k)},\Delta_2^{(k)}) \to  \qty(1,0)$. Hence, in this limit, the decoding measurements almost completely preserve the preparation distinguishabilities, thereby enabling an arbitrarily long sequence of receivers to each obtain a quantum advantage. This is consistent with the intuition  discussed below Eq.(\ref{lamsingcr}).
\end{Remark}

%%%%%%%%%%%%%%%%%%%%%%%%%%%%%%%%%%%%%%%%%%%%%%%%%%%%%%%%%%%%%%%%%%%%%%%%%%%%%%%%%%%%%%%%%%%%%%%%%%%%%%%%%%%%%%%%%%%%%%%%%%%%%%%%%%%%%%%%%%%%%%%%%%%%%%%%%%%%%%%%%%%%%%%%%%%%%%%%%%%%%%%%%%%%%%%%%%%%%%%%%%%%%%%%%%%%%%%%%%%%%%%%%%%%%%%%%%%%%%%%%%%%%%%%%%%%%%%%%%%%%%%%%%%%%%%%%%%%

\section{Outlook}

We have shown that a single qubit can exhibit quantum advantage in a sequential RAC scenario comprising a single sender and an arbitrarily long sequence of independent receivers. This reveals a previously unexplored operational feature of a qubit: its ability to retain encoded information even after repetitive decoding attempts with quantum advantages, despite the inherently invasive nature of the decoding measurements. Our results, therefore, bring out a non-trivial information-disturbance trade-off in sequential measurement scenarios.

To establish this result, we have identified preparation distinguishability as one of the operational resources governing RAC performance. The notion introduced here is closely related to the concept of $p$-distinguishability introduced in the framework of bounded ontological distinctness \cite{Chaturvedi2020}. Unlike that framework, however, our approach provides an operational characterisation within a communication task. This perspective also suggests a possible route towards a more complete resource-theoretic understanding of RACs, incorporating not only measurement incompatibility \cite{Saha2023,Saha2020,Carmeli2020}, but also preparation distinguishability.

Recent studies have explored the recycling of spatial quantum correlations among sequential observers \cite{Cai2025}. By contrast, the present work establishes unbounded sequential sharing in a fundamentally different setting, viz. a qubit-based prepare-measure communication task. In particular, our result goes beyond the phenomenon of unbounded sequential sharing of nonlocal quantum correlations \cite{Brown2020,Sasmal2024} to a communication framework that is operationally distinct and relies on different underlying resources. More broadly, it opens up the possibility of one-to-many communication protocols not only across space, but also across time.

Several directions for future investigation naturally emerge from our results. In particular, it would be worthwhile to explore the role of preparation distinguishability in more general prepare-measure-based communication scenarios \cite{Gallego2010,Gois2021,Manna2024,Singh2025,Brask2026}. In the present work, for $k$ sequential receivers, we have identified a  family of state preparations characterised by $(\Delta_1,\Delta_2)=(\cos\omega,r \sin\omega)$, where both $\omega$ and $r$ are independent of $k$, with $\omega \to 0^+$, for which every receiver can achieve a quantum advantage even when $k$ becomes arbitrarily large.  It would be interesting to investigate other families of preparations of the form $(\Delta_1,\Delta_2)=(r'(k)\cos\theta(k),r'(k) \sin\theta(k))$ with $r'(k) \leq 1$, which may likewise enable quantum advantage for arbitrarily many sequential receivers. 

%%%%%%%%%%%%%%%%%%%%%%%%%%%%%%%%%%%%%%%%%%%%%%%%%%%%%%%%%%%%%%%%%%%%%%%%%%%%%%%%%%%%%%%%%%%%%%%%%%%%%%%%%%%%%%%%%%%%%%%%%%%%%%%%%%%%%%%%%%%%%%%%%%%%%%%%%%%%%%%%%%%%%%%%%%%%%%%%%%%%%%%%%%%%%%%%%%%%%%%%%%%%%%%%%%%%%%%%%%%%

\section*{Acknowledgements}
SS acknowledges the support from the National Natural Science Fund of China (Grant No. W2533013). SK acknowledges the support from Digital Horizon Europe project, FoQaCia (\textit{Foundations of quantum computational advantage}), GA no 202070558, funded by the European Union and NSERC (Canada).

%%%%%%%%%%%%%%%%%%%%%%%%%%%%%%%%%%%%%%%%%%%%%%%%%%%%%%%%%%%%%%%%%%%%%%%%%%%%%%%%%%%%%%%%%%%%%%%%%%%%%%%%%%%%%%%%%%%%%%%%%%%%%%%%%%%%%%%%%%%%%%%%%%%%%%%%%%%%%%%%%%%%%%%%%%%%%%%%%%%%%%%%%%%%%%%%%%%%%%%%%%%%%%%%%%%%%%%%%%%%%%%%%%%%%%%%%%%%%%%%%%%%%%%%%%%%%%%%%%%%%%%%%%%%%%%%%%%%%%%%%%%%%%%%

\bibliography{references} 

%%%%%%%%%%%%%%%%%%%%%%%%%%%%%%%%%%%%%%%%%%%%%%%%%%%%%%%%%%%%%%%%%%%%%%%%%%%%%%%%%%%%%%%%%%%%%%%%%%%%%%%%%%%%%%%%%%%%%%%%%%%%%%%%%%%%%%%%%%%%%%%%%%%%%%%%%%%%%%%%%%%%%%%%%%%%%%%%%%%%%%%%%%%%%%%%%%%%%%%%%%%%%%%%%%%%%%%%%%%%%%%%%%%%%%%%%%%%%%%%%%%%%%%%%%%%%%%%%%%%%%%%%%%%%%%%%%%%%%%%%%%%%%%%

\appendix
\onecolumngrid

\section{Relating   Guessing Probability and Preparation Distinguishability} \label{apcgppd}

Consider a binary discrimination task between the two preparations $P_0$ and $P_1$ chosen equal prior probabilities. Suppose, under a fixed measurement $M$ with outcomes $m\in\qty{0,1}$, $P_0$ is guessed upon observing $m=0$, and  $P_1$ is guessed otherwise. Hence, with this fixed measurement, the average success probability is 
\begin{equation}\label{fixmgdistas}
    p_{\rm g}(P_0,P_1|M)=\frac{1}{2}\qty[p(0|P_0,M)+p(1|P_1,M)].
\end{equation}
Optimising over all measurements, we get
\begin{align}
    p_{\rm g}^{\rm opt}(P_0,P_1) = \frac{1}{2}  \max_M \sum_{m\in \{0,1\}} p(m|P_m,M) \label{2statedistp001}
\end{align}

\begin{Lemma}\label{cgppd}
For any pair of preparations $P_0,P_1$,    
\begin{equation}\label{cgppdeq}
p_{\rm g}^{ \rm opt}(P_0,P_1)=\frac{1}{2}\qty[1+\Delta(P_0,P_1)].
\end{equation}
\end{Lemma}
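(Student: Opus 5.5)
We prove the two inequalities $p_{\rm g}^{\rm opt}\le \frac12[1+\Delta]$ and $p_{\rm g}^{\rm opt}\ge \frac12[1+\Delta]$ separately. The main subtlety is that the maximisation in the definition of $\Delta(P_0,P_1)$ runs over measurements $M$ with arbitrary outcome sets $\mathcal{O}_m$. By contrast, $p_{\rm g}^{\rm opt}$ in Eq.~\eqref{2statedistp001} uses only binary measurements. We therefore need to reduce a general measurement to a binary one, and this requires the operational theory to be closed under classical post-processing (coarse-graining) of outcomes. We take this as a standard assumption of operational theories: given $M$, relabelling or merging its outcomes defines a valid measurement $M'$ whose statistics are the corresponding sums of $p(m|P,M)$.

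\emph{Step 1 (binary identity).} Fix any binary measurement $M$ with outcomes $\{0,1\}$. Write $p_j:=p(0|P_j,M)$, so that $p(1|P_j,M)=1-p_j$ by normalisation. Then
\begin{equation}
p_{\rm g}(P_0,P_1|M)=\tfrac12\qty[p_0+1-p_1]=\tfrac12\qty[1+(p_0-p_1)],
\end{equation}
while $\frac12\sum_m\abs{p(m|P_0,M)-p(m|P_1,M)}=\abs{p_0-p_1}$. Hence $p_{\rm g}(P_0,P_1|M)\le \frac12[1+\Delta(P_0,P_1)]$ for every binary $M$. Taking the maximum over binary $M$ gives the upper bound.

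\emph{Step 2 (reduction of general measurements).} Let $M$ be an arbitrary measurement. Define $S=\{m : p(m|P_0,M)\ge p(m|P_1,M)\}$ and set $\delta_m:=p(m|P_0,M)-p(m|P_1,M)$. Since $\sum_m \delta_m=0$ by normalisation, we have
\begin{equation}
\tfrac12\sum_m\abs{\delta_m}=\sum_{m\in S}\delta_m.
\end{equation}
Now coarse-grain $M$ into a binary measurement $M'$ that outputs $0$ if $m\in S$ and $1$ otherwise. Then $p(0|P_0,M')-p(0|P_1,M')=\sum_{m\in S}\delta_m\ge0$. By Step 1, this gives $p_{\rm g}(P_0,P_1|M')=\frac12\qty[1+\frac12\sum_m\abs{\delta_m}]$.

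\emph{Step 3 (conclusion).} Step 2 applies to every $M$, in particular to the maximiser of $\Delta(P_0,P_1)$; if the maximum is not attained, we use a supremum argument instead. This yields $p_{\rm g}^{\rm opt}\ge\frac12[1+\Delta(P_0,P_1)]$. Combined with Step 1, equality follows.

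Step 2 is the only nontrivial step. Its correctness rests on the coarse-graining closure assumption. If the paper intends $\Delta$ to be defined only over binary measurements, Step 1 alone suffices, with the sign issue handled by relabelling outcomes. In the quantum case, this closure is automatic, since $E_0'=\sum_{m\in S}E_m$ is again a valid POVM element.
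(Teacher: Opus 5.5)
Your proof is correct. Your Step 1 is essentially the paper's entire argument. The paper writes $p_{\rm g}^{\rm opt}=\frac12+\frac12\max_M\{p(0|P_0,M)-p(0|P_1,M)\}$ and argues that at the optimum this bias is non-negative, which is implicitly a relabelling of outcomes, so it can be replaced by its absolute value. It then uses $p(1|P,M)=1-p(0|P,M)$ to identify $|p(0|P_0,M)-p(0|P_1,M)|$ with $\frac12\sum_m|p(m|P_0,M)-p(m|P_1,M)|$.

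Where you differ is Step 2. The paper's last equality equates the maximal total variation distance over \emph{two-outcome} measurements with $\Delta(P_0,P_1)$. Definition~1, however, optimises over measurements with arbitrary outcome sets $\mathcal{O}_m$, and the paper never argues that restricting to two outcomes loses nothing. Your coarse-graining onto $S=\{m:\delta_m\ge 0\}$ supplies exactly this reduction; it is the operational counterpart of the Helstrom projector onto the positive part of $\rho_0-\rho_1$. Splitting the proof into two inequalities also sidesteps the paper's informal sign argument, which at one point asserts a strict inequality $p(0|P_0,M^*)>p(0|P_1,M^*)$ that fails when $P_0$ and $P_1$ are operationally equivalent.

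The cost of your route is the explicit closure-under-post-processing assumption. The paper's sign step already tacitly needs closure under relabelling, though, and in quantum theory both hold automatically. So the paper's shorter argument is adequate if $\Delta$ is read as a maximum over binary tests, as in its quantum section where the optimisation runs over $\{E_{b|y}\}$. Yours establishes the lemma for Definition~1 as stated.
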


\begin{proof}
Rewriting Eq.(\ref{2statedistp001}), it follows that
\begin{align}
    p_{\rm g}^{\rm opt}(P_0,P_1) =\frac{1}{2}+\frac{1}{2}\max_M \qty{p(0|P_0,M)-p(0|P_1,M)}. \label{2statedistp00}
\end{align}
The term in brackets quantifies how strongly outcome `$0$' of the optimal measurement favours the preparation $P_0$ over $P_1$.  The optimisation over all measurement strategies ensures that, for the optimal measurement $M*$, this bias is always non-negative., i.e. $p(0|P_0,M*)>p(0|P_1,M*)$. Consequently, Bob’s optimal strategy is to guess $P_0$ whenever he observes outcome `$0$', and to guess $P_1$ otherwise. This choice guarantees that the difference term in Eq. (\ref{2statedistp00}) contributes positively to the success probability, so that the maximisation can be written as
\begin{equation}\label{2statedistp1}
    p_{\rm g}^{\rm opt}(P_0,P_1)=\frac{1}{2}+\frac{1}{2}\max_M \abs{p(0|P_0,M)-p(0|P_1,M)} .
\end{equation}
However, this expression implicitly assumes that outcome `$0$' is the one that favours $P_0$. In general, the outcome that provides the higher bias could be either `$0$' or `$1$', depending on the measurement and the preparation. For dichotomic measurements, the outcome probabilities are related by $p(1|P,M)=1-p(0|P,M)$. Therefore, the absolute difference for one outcome equals that for the other, and we can express it symmetrically as
\begin{equation}
    \abs{p(0|P_0,M)-p(0|P_1,M)}=\abs{p(1|P_0,M)-p(1|P_1,M)} =\frac{1}{2} \sum_m \abs{p(m|P_0,M)-p(m|P_1,M)}.
\end{equation}
Substituting this into the previous expression gives
\begin{equation}\label{2statedistp1f}
    p_{\rm g}^{\rm opt}(P_0,P_1)=\frac{1}{2}+\frac{1}{4}\max_M \sum_m \abs{p(m|P_0,M)-p(m|P_1,M)}=\frac{1}{2}\qty[1+\Delta(P_0,P_1)],
\end{equation}
which follows from the definition of $\Delta(P_0,P_1)$.
\end{proof}

%%%%%%%%%%%%%%%%%%%%%%%%%%%%%%%%%%%%%%%%%%%%%%%%%%%%%%%%%%%%%%%%%%%%%%%%%%%%%%%%%%%%%%%%%%%%%%%%%%%%%%%%%%%%%%%%%%%%%%%%%%%%%%%%%%%%%%%%%%%%%%%%%%%%%%%%%%%%%%%%%%%%%%%%%%%%%%%%%%%%%%%%%%%%%%%%%%%%%%%%%%%%%%%%%%%%%%%%%%%%%%%%%%%%%%%%%%%%%%%%%%%%%%%%%%%%%%%%%%%%%%%%%%%%%%%%%%%%%%%%%%%%%%%%

\section{Expressing the Average Success Probability of  $2 \to 1$ RAC in terms of Preparation Distinguishabilities} \label{apasppd}

In the $2 \to 1$ RAC, the marginal ensembles associated with the first bit $x_1$ received by Alice is defined as
\begin{equation}\label{pren1}
    P_0^{(1)}=\frac{1}{2}\qty{P_{00},P_{01}}; \ \  P_1^{(1)}=\frac{1}{2}\qty{P_{10},P_{11}}.
\end{equation}
Operationally, $P_{x_1}^{(1)}$ corresponds to preparing $P_{x_1 0}$ or $P_{x_1 1}$ with equal probability. Consequently, for any measurement $M_1$,
\begin{equation}\label{ensdesopx10}
p\qty(m|P_{x_1}^{(1)},M_1)=\frac{1}{2}\qty[p(m|P_{x_1 0},M_1)+p(m|P_{x_1 1},M_1)].
\end{equation}
An analogous construction applies for the second bit $x_2$, for which we define
\begin{equation}\label{pren2}
P_0^{(2)}=\frac{1}{2}\qty{P_{00},P_{10}}; \ \  P_1^{(2)}=\frac{1}{2}\qty{P_{01}, P_{11}};
\end{equation}
with the corresponding operational relation for any measurement $M_2$:
\begin{equation}\label{ensdesopx11}
p\qty(m|P_{x_2}^{(2)},M_2)=\frac{1}{2}\qty[p(m|P_{0x_2},M_2)+p(m|P_{1 x_2},M_2)].
\end{equation}
For each input $y$ received by Bob, Bob's task of recovering $x_y$ is operationally equivalent to a binary discrimination problem between the two ensembles $P_0^{(y)}$ or $P_1^{(y)}$. His optimal success probability for this task is, therefore, given by the optimal guessing probability
\begin{equation}\label{guesoppr}
    p^{\rm opt}_g\qty(P_0^{(y)},P_1^{(y)})=\frac{1+\Delta_y}{2}; \ \ \Delta_y:=\Delta\qty(P_0^{(y)},P_1^{(y)}),
\end{equation}
where $\Delta_y$ quantifies the preparation distinguishability of the marginal ensembles associated with the bit $x_y$.

In the RAC task, for a fixed value of $y$, Bob aims to correctly guess the bit $x_y$ encoded in Alice’s preparation, by performing the measurement $M_y$ with outcomes being denoted by $b$. Since Alice’s input $x=(x_1,x_2)$ is uniformly distributed, each preparation $P_x$ occurs with probability $\frac{1}{4}$. The corresponding success probability is therefore
\begin{equation}\label{racsuccprep1}
p_{\rm succ}(x_y)= \frac{1}{4} \sum_x p(b=x_y|P_x,M_y).
\end{equation} 
We now show that this quantity coincides with the success probability of discriminating between the ensembles $P_0^{(y)}$ and $P_1^{(y)}$ using the measurement $M_y$.  At first, consider the case $y=1$. From Eq.~(\ref{racsuccprep1}), we obtain
    \begin{equation}\label{consucdisy1}
\begin{aligned}
p_{\rm succ}(x_{y=1}) &= \frac{1}{4} \sum_{x_1,x_2} p\qty(b=x_1 \mid P_{x_1x_2},M_1) \\
& =  \frac{1}{4} \sum_{x_1}\qty[ p\qty(b=x_1 \mid P_{x_10},M_1) +p\qty(b=x_1 \mid P_{x_1 1},M_1)] \\
    & = \frac{1}{4} \qty[p\qty(b=0 \mid P_{00},M_1)+p\qty(b=0 \mid P_{01},M_1)+p\qty(b=1 \mid P_{10},M_1)+p\qty(b=1 \mid P_{11},M_1)] \\
    &= \frac{1}{2}\qty[\frac{1}{2}\qty{p\qty(b=0 \mid P_{00},M_1)+p\qty(b=0 \mid P_{01},M_1)}+\frac{1}{2}\qty{p\qty(b=1 \mid P_{10},M_1)+p\qty(b=1 \mid P_{11},M_1)}] \\
    &=\frac{1}{2}\qty[p\qty(0 \mid P_{0}^{(1)},M_1)+p\qty(1 \mid P_{1}^{(1)},M_1)] \quad \text{[From Eq.~(\ref{ensdesopx10})]} \\
    &= \frac{1}{2} \sum_m p\qty(m \mid P_m^{(1)},M_1) \\
    &= p_{\rm g}\qty(P_0^{(1)},P_1^{(1)}|M_1). \quad \text{[From Eq.~(\ref{fixmgdistas})]}
    \end{aligned}
\end{equation}
Thus, it follows from Eqs.~(\ref{guesoppr}) and (\ref{consucdisy1}) that
\begin{equation} \label{sucdisy1f}
    p_{\rm succ}(x_{y=1})=p_{\rm g}\qty(P_0^{(1)},P_1^{(1)}|M_1) \leq \frac{1}{2}\qty(1+\Delta_1).
\end{equation}
with equality when $M_1=M_1^*$ is chosen for optimally  discriminating between $P_0^{(1)}$ and $P_1^{(1)}$. An analogous argument applies for $y=2$, yielding
\begin{equation}\label{sucdisy2f}
p_{\rm succ}(x_{y=2}) = p_{\rm g}\qty(P^{(2)}_{0},P^{(2)}_1|M_2) \leq \frac{1}{2}\qty(1+\Delta_2).
\end{equation}
Averaging over both inputs $y\in\qty{1,2}$, from Eqs.~(\ref{sucdisy1f}) and (\ref{sucdisy2f}), the \textit{average} RAC success probability becomes
\begin{equation}\label{avgsuccf}
        \mathcal{P}_{\rm avg}=\frac{1}{2}\qty[p_{\rm succ}(x_{y=1})+p_{\rm succ}(x_{y=2})] \leq \frac{1}{2}+\frac{\Delta_1+\Delta_2}{4},
\end{equation}
with the bound being tight whenever Bob implements the optimal measurement $(M_y^*)$ for each marginal ensemble description.

%%%%%%%%%%%%%%%%%%%%%%%%%%%%%%%%%%%%%%%%%%%%%%%%%%%%%%%%%%%%%%%%%%%%%%%%%%%%%%%%%%%%%%%%%%%%%%%%%%%%%%%%%%%%%%%%%%%%%%%%%%%%%%%%%%%%%%%%%%%%%%%%%%%%%%%%%%%%%%%%%%%%%%%%%%%%%%%%%%%%%%%%%%%%%%%%%%%%%%%%%%%%%%%%%%%%%%%%%%%%%%%%%%%%%%%%%%%%%%%%%%%%%%%%%%%%%%%%%%%%%%%%%%%%%%%%%%%%%%%%%%%%%%%%

\section{Derivation of $\Delta_1^2 + \Delta_2^2 \leq 1$ for Qubit Preparations.}\label{apqbprd}

We now derive an quantum upper bound on the quantities $\Delta_1$ and $\Delta_2$. Define the Hermitian operators
\begin{equation}\label{sandt}
    S:=\rho_{00}-\rho_{11}; \quad  T:= \rho_{01}-\rho_{10}.
\end{equation}
From the definition of the marginal ensembles, the preparation distinguishabilities are given by
\begin{equation}\label{qb1}
\begin{aligned}
    \Delta_1&:=\frac{1}{2} \norm{\rho_{0}^{(1)}-\rho_{1}^{(1)}}_1=\frac{1}{4}\norm{S+T}_1, \\  \Delta_2&:=\frac{1}{2} \norm{\rho_{0}^{(2)}-\rho_{1}^{(2)}}_1=\frac{1}{4}\norm{S-T}_1.
    \end{aligned} \implies  \Delta_1^2+\Delta_2^2=\frac{1}{16}\qty(\norm{S+T}^2_1+\norm{S-T}^2_1).
\end{equation}
The trace norm appearing above generally requires diagonalising the operator $\rho_{0}^{(y)}-\rho_{1}^{(y)}$, which may become computationally demanding. It is, therefore, useful to relate it to the Schatten-2 norm $\norm{\cdot}_2$, which is also related to the  Hilbert-Schmid (HS) distance $D_{\rm HS}(\cdot)$ as follows
\begin{equation}\label{HSdistdef}
    \norm{\rho_0-\rho_1}^2_2 = D_{\rm HS}(\rho_0,\rho_1):=\Tr[(\rho_0-\rho_1)^2].
\end{equation}
Although the HS distance does not possess the same direct operational interpretation as the trace distance \cite{Ozawa2000}, it admits a simpler algebraic form. The Schatten norm inequalities imply \cite{Watrous2018book}
\begin{equation}
    \norm{X}_2 \leq \norm{X}_1 \leq \sqrt{\rank(X)} \norm{X}_2, \label{normrelations}
\end{equation}
where  the first inequality follows from the monotonicity of the Schatten norms and the second one follows from the Cauchy-Schwartz inequality \cite{Coles2019}. From Eqs.(\ref{qb1}) and (\ref{normrelations}), we get
\begin{align}
    \Delta_1^2+\Delta_2^2 \leq \frac{1}{16} (R_+ \norm{S+T}^2_2+ R_-\norm{S-T}^2_2) \leq \frac{R}{16} (\norm{S+T}^2_2+ \norm{S-T}^2_2),
\end{align}
where $R_{\pm} = \rank(S \pm T)$ and $R:=\max(R_+,R_-)$. Since $S \pm T$ are operators on two-dimensional Hilbert spaces, $R \leq 2$. Using the parallelogram identity,
\begin{equation}
    \norm{S+T}^2_2+\norm{S-T}^2_2 = 2 \qty(\norm{S}^2_2+\norm{T}^2_2) \implies \Delta_1^2 +\Delta_2^2 \leq \frac{R}{8} \qty(\norm{S}^2_2+\norm{T}^2_2).
\end{equation}
\begin{figure}[t!]
\includegraphics[width=0.5\linewidth]{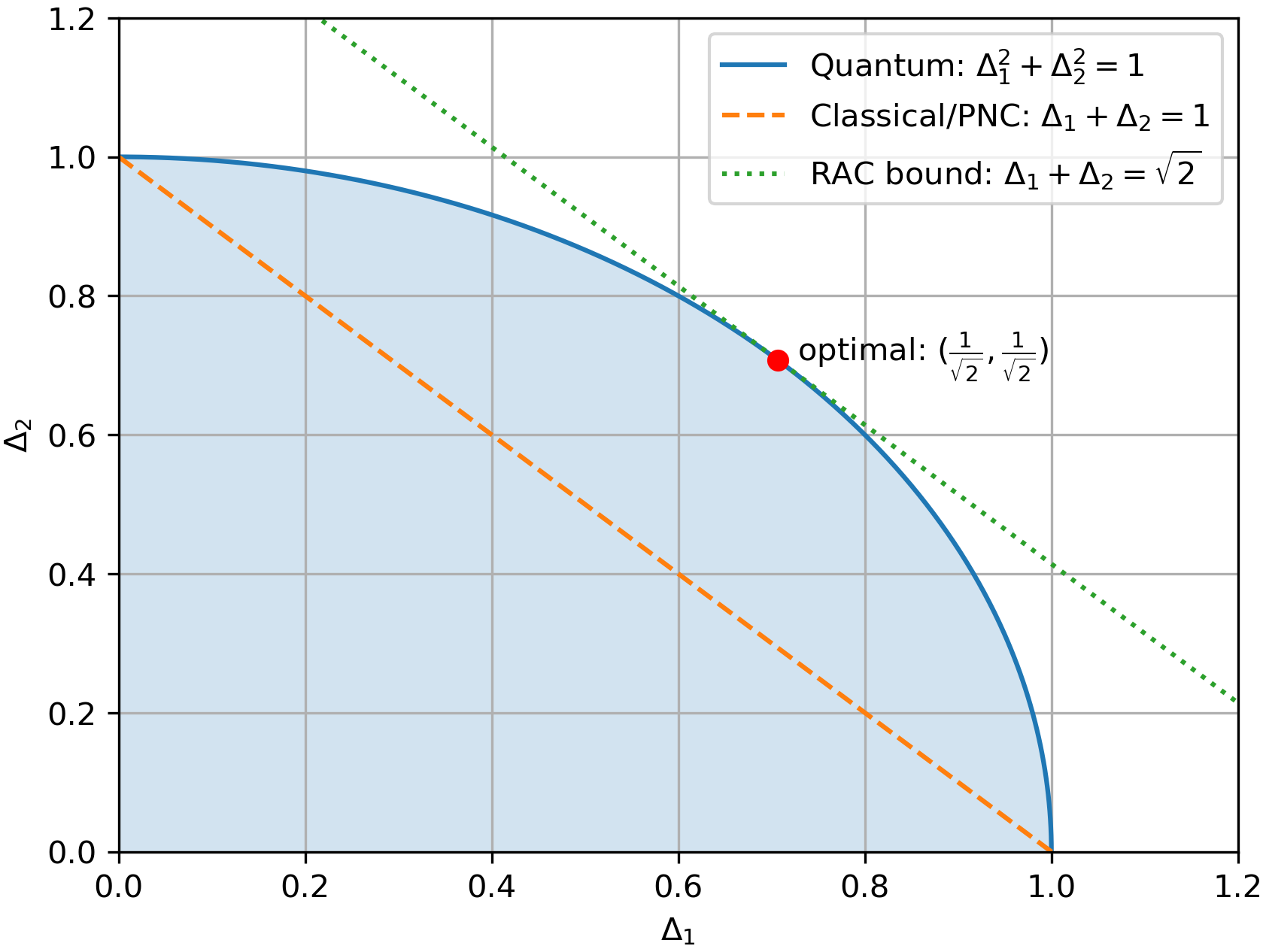}
\caption{Comparison of constraints on preparation distinguishabilities $(\Delta_1,\Delta_2)$ in the 2$\to$1 RAC. The set achievable with any classical/preparation-noncontextual models are restricted to the simplex $\Delta_1+\Delta_2 \leq 1$, whereas the set of quantum-achievable distinguishability pairs forms a quarter unit disc $\Delta_1^2+\Delta_2^2 \leq 1$, with the optimal quantum point at $\qty(\frac{1}{\sqrt{2}},\frac{1}{\sqrt{2}})$.}
\label{figpdbil}
\end{figure}
Since $S$ and $T$ are Hermitian and are differences of density operators, we have
\begin{equation}
\begin{aligned}
    \norm{S}_2^2&=\Tr[(\rho_{00}-\rho_{11})^2]=\Tr[\rho_{00}^2]+\Tr[\rho_{11}^2]-2\Tr[\rho_{00}\rho_{11}],\\
    \norm{T}_2^2&=\Tr[(\rho_{01}-\rho_{10})^2]=\Tr[\rho_{01}^2]+\Tr[\rho_{10}^2]-2\Tr[\rho_{01}\rho_{10}].
\end{aligned}
\end{equation}
For any density operator $\rho$, $\Tr[\rho^2]\leq 1$, with equality only for pure states. Also, for any two density operators $\rho,\sigma$, the term $\Tr[\rho\sigma]\geq 0$. Therefore,
\begin{equation}
    \norm{S}^2_2 \leq \Tr[\rho_{00}^2]+\Tr[\rho_{11}^2] \leq 2, \ \ \norm{T}^2_2 \leq \Tr[\rho_{01}^2]+\Tr[\rho_{10}^2] \leq 2.
\end{equation}
Hence,
\begin{equation}
\Delta_1^2+\Delta_2^2 \leq \frac{R}{2} \leq 1. 
\end{equation} 
In the above, for the equality to be attained, $\rho_{00}$ and $\rho_{11}$ must be mutually orthogonal pure states, and $\rho_{01}$ and $\rho_{10}$ must likewise be mutually orthogonal pure states.

Using the Cauchy-Schwarz inequality, it can be shown that for any set of qubit preparations $\qty{\rho_{x_1x_2}}$, 
    \begin{equation}
        \max_{\rho_{x_1x_2}} \Delta_1+\Delta_2=\sqrt{2}, \ \ \text{subject to } \Delta_1^2+\Delta_2^2 \leq 1. \label{optdelta}
    \end{equation}
The constraint $\Delta_1^2+\Delta_2^2 \leq 1$ admits a simple geometric interpretation in the $(\Delta_1,\Delta_2)$ plane as shown in Fig.~\ref{figpdbil}. The quantum maximum of $\Delta_1+\Delta_2$ is attained for $\Delta_1 = \Delta_2 = 1/\sqrt{2}$. Note that $\Delta_1=\Delta_2=1/\sqrt{2}$ implies that Alice's four pure qubit states lie on the vertices of a square in the equatorial plane of the Bloch sphere.

%%%%%%%%%%%%%%%%%%%%%%%%%%%%%%%%%%%%%%%%%%%%%%%%%%%%%%%%%%%%%%%%%%%%%%%%%%%%%%%%%%%%%%%%%%%%%%%%%%%%%%%%%%%%%%%%%%%%%%%%%%%%%%%%%%%%%%%%%%%%%%%%%%%%%%%%%%%%%%%%%%%%%%%%%%%%%%%%%%%%%%%%%%%%%%%%%%%%%%%%%%%%%%%%%%%%%%%%%%%%%%%%%%%%%%%%%%%%%%%%%%%%%%%%%%%%%%%%%%%%%%%%%%%%%%%%%%%%%%%%%%%%%%%%%%%%%%%%%%%%%%%%%%%%%%%%%%%%%%%%%%%

\section{Regions of the Unsharpness Parameter that enable a Quantum Advantage in the 2$\to$1 RAC} \label{regionunsharpfigures}

In the 2$\to$1 RAC,   consider that Bob performs dichotomic unbiased qubit POVMs (unsharp measurements) of the form: $\mathcal{B}_y\equiv \qty{ E_{b|y} = \frac{\openone + (-1)^b \lambda_y B_y}{2} }$ with $\lambda_y \in [0,1]$,
where $B_y$ is a Hermitian operator with eigenvalues $\pm 1$. A quantum advantage is achieved whenever 
    \begin{equation}\label{qacunspdappendx}
        \lambda_1 \Delta_1+\lambda_2 \Delta_2>1, \ \ \text{subject to } \Delta_1^2 + \Delta_2^2 \leq 1,
    \end{equation}
 where $\Delta_y$ denotes the preparation distinguishability of the marginal ensembles. 

In the symmetric case, $\lambda_1=\lambda_2=\lambda^{(s)}$. In this case, the condition for a quantum advantage is given by
\begin{equation}
    \lambda^{(s)} >  \lambda^{(s)}_c=\frac{1}{\Delta_1+\Delta_2}, \ \ \text{subject to } \Delta_1^2+\Delta_2^2 \leq 1.
\end{equation}
The minimum of $\lambda^{(s)}_c$ is attained at $\Delta_1=\Delta_2=1/\sqrt{2}$, yielding $\lambda_c^{(s)}=1/\sqrt{2}$. The corresponding trade-off between preparation distinguishability and measurement incompatibility is illustrated in Fig.~\ref{figlameq}.

In the asymmetric case, setting $\lambda_1=1$ and $\lambda_2=\lambda^{(\mathrm{as})}$, the condition becomes
\begin{equation}
\lambda^{(\mathrm{as})} > \lambda^{(\mathrm{as})}_c= \frac{1-\Delta_1}{\Delta_2}, \ \ \text{subject to } \Delta_1^2+\Delta_2^2 \leq 1.
\end{equation}
In contrast to the symmetric scenario, the minimum of $\lambda^{(\mathrm{as})}_c$ is not achieved at $\Delta_1=\Delta_2=1/\sqrt{2}$ (where $\lambda_c^{(\mathrm{as})}=\sqrt{2}-1$), but instead $\lambda_c^{(\mathrm{as})}\to 0$ as $\Delta_1\to 1$ (and $\Delta_2\to 0$). The corresponding trade-off is shown in Fig.~\ref{figlamsing}.

\begin{figure*}[t!]
    \centering
    \begin{subfigure}{0.48\textwidth}
        \centering
        \includegraphics[width=\linewidth]{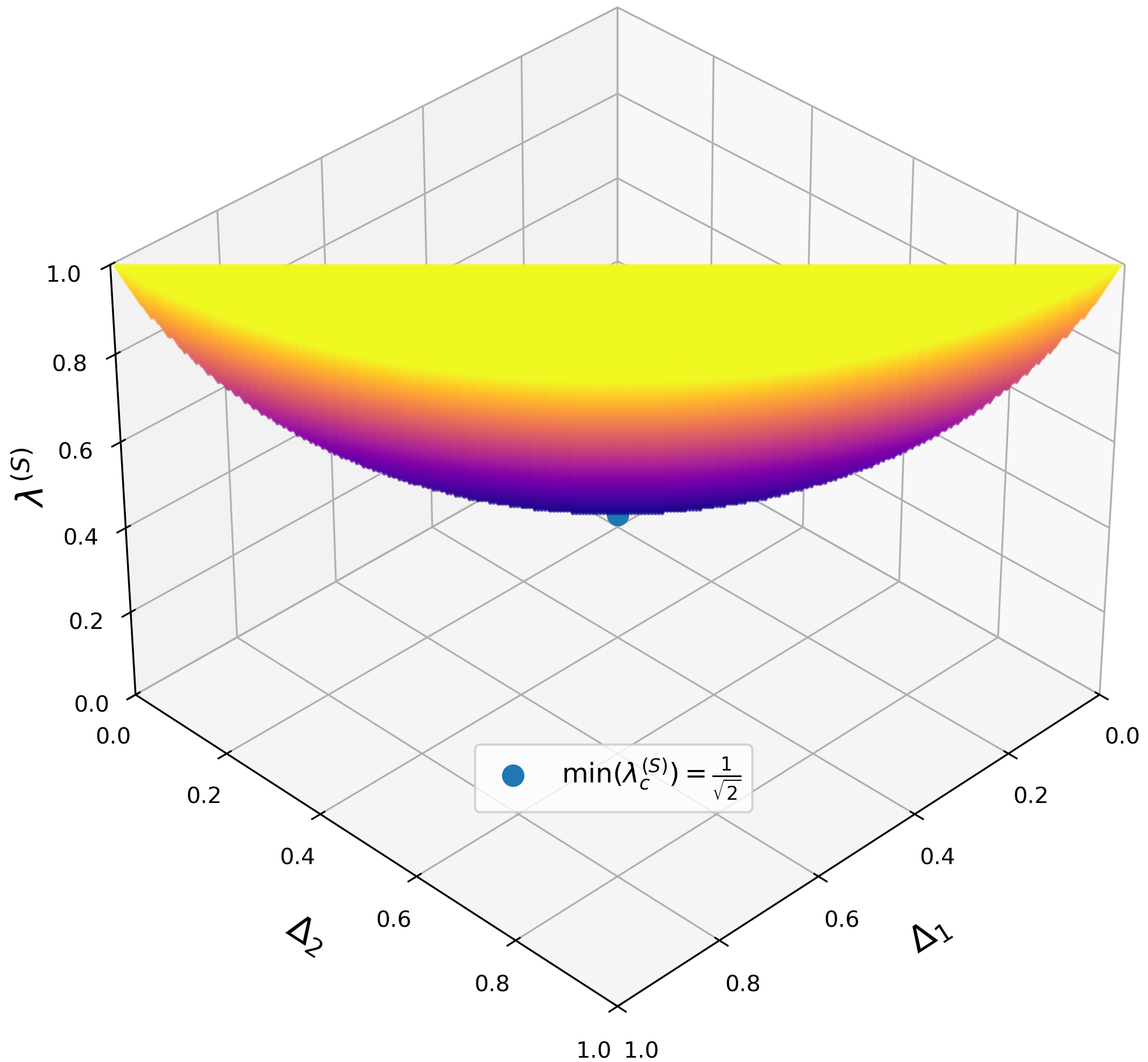}
        \caption{}
        \label{figlameq}
    \end{subfigure}
    \hfill
    \begin{subfigure}{0.48\textwidth}
        \centering
        \includegraphics[width=\linewidth]{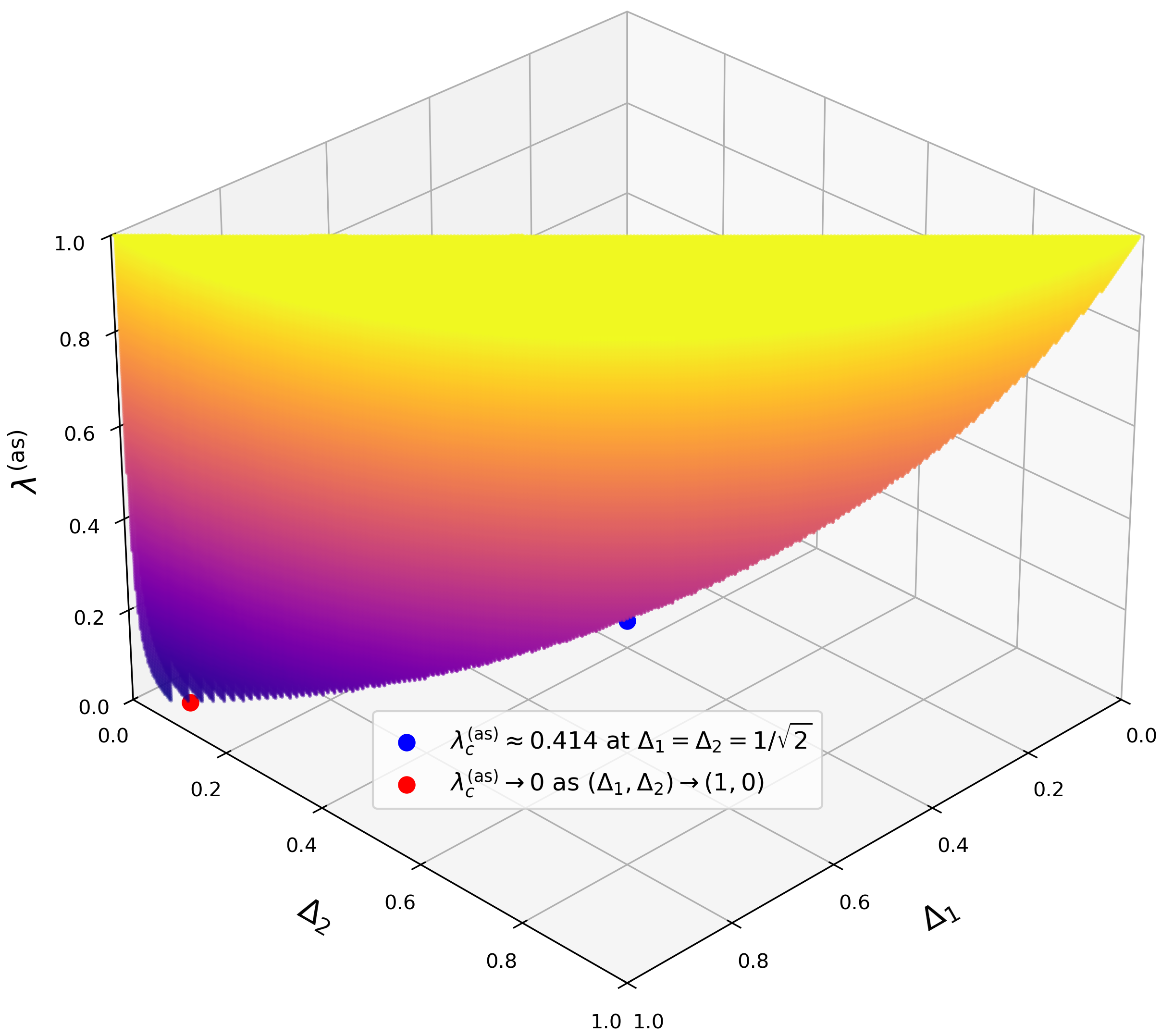}
        \caption{}
        \label{figlamsing}
    \end{subfigure}
    \caption{ Regions of the unsharpness parameter yielding a quantum advantage in the $2\to1$ RAC as functions of the preparation distinguishabilities $\Delta_1$ and $\Delta_2$, subject to the constraint $\Delta_1^2+\Delta_2^2\leq 1$. (a) Symmetric setting: the coloured region represents
    $\lambda^{(s)}>1/(\Delta_1+\Delta_2)$ that enables quantum advantage. The minimal threshold $\min(\lambda_c^{(s)})=1/\sqrt{2}$ is attained at $\Delta_1=\Delta_2=1/\sqrt{2}$.
    (b) Asymmetric setting $(\lambda_1=1)$: the coloured region corresponds to $\lambda^{(\rm as)}>\lambda_c^{(\rm as)} = (1-\Delta_1)/\Delta_2$ that enables quantum advantage. The blue point marks $\lambda_c^{(s)}\approx0.414$ at
    $\Delta_1=\Delta_2=1/\sqrt{2}$ (for which the minimal threshold of $\lambda_c^{(s)}$ is attained in the symmetric setting), while the red point corresponds to  $\min(\lambda_c^{(\rm as)})\to0$ in the limit
    $(\Delta_1,\Delta_2)\to(1,0)$.}
    \label{fig:lambda_regions}
\end{figure*}
%%%%%%%%%%%%%%%%%%%%%%%%%%%%%%%%%%%%%%%%%%%%%%%%%%%%%%%%%%%%%%%%%%%%%%%%%%%%%%%%%%%%%%%%%%%%%%%%%%%%%%%%%%%%%%%%%%%%%%%%%%%%%%%%%%%%%%%%%%%%%%%%%%%%%%%%%%%%%%%%%%%%%%%%%%%%%%%%%%%%%%%%%%%%%%%%%%%%%%%%%%%%%%%%%%%%%%%%%%%%%%%%%%%%%%%%%%%%%%%%%%%%%%%%%%%%%%%%%%%%%%%%%%%%%%%%%%%%%%%%%%%%%%%%%%%%%%%%%%%%%%%%%%%%%%%%%%%%%%%%%%%

\section{Evolution of Preparation Distinguishabilities under Sequential Measurements} \label{apcapd}

We analyse how preparation distinguishabilities evolve under the measurement channel applied sequentially by multiple receivers. Let us denote by $\Phi^{(k)}$ the non-selective measurement channel corresponding to the $k$-th receiver, Bob$^{(k)}$. If Bob$^{(k-1)}$ receives the ensemble of qubits $\qty{\rho_{x_1x_2}^{(k-1)}\in \mathscr{L}(\mathcal{H}^2)}$, then after the action of his measurement channel, the states passed on to Bob$^{(k)}$ are given by $\rho_{x_1x_2}^{(k)}=\Phi^{(k-1)}\qty(\rho_{x_1x_2}^{(k-1)})$. Let us assume the same set of $B_1$ and $B_2$ for any Bob$^{(k)}$, i.e.  the choice of the two measurements for each Bob$^{(k)}$ (for any $k = 1, 2, 3, \cdots$) is the following: sharp projective measurement of $B_1$, and unsharp measurement of $B_2$  parametrized by $\lambda_k$. With these, the channel $\Phi^{(k-1)}$ is defined as an equal mixture of two measurement channels (a dephasing channel and an unsharp measurement channel), and admits the Kraus representation
\begin{equation}\label{aptotchan}
   \Phi^{(k-1)}\qty(\rho_{x_1x_2}^{(k-1)})= \frac{1}{2}\qty[\Phi_1 \qty(\rho_{x_1x_2}^{(k-1)})+\Phi^{(k-1)}_2\qty(\rho_{x_1x_2}^{(k-1)})] = \frac{1}{2}\sum_{b \in \qty{\pm}}\qty(B_{b|1} \rho^{(k-1)}_{x_1x_2} B_{b|1}+ K^{(k-1)}_{b} \rho^{(k-1)}_{x_1x_2} K^{(k-1)}_{b}).
\end{equation}
An equal mixture of two channels is considered here as each Bob receives his input randomly. Here, $\Phi_1$ denotes the dephasing channel associated with a projective measurement of the observable $B_1$, which is independent of $k$, while $\Phi_2^{(k-1)}$ represents an unsharp measurement channel corresponding to the observable $B_2$. The projective measurement operators and the Kraus operators for the unsharp measurement are given by
\begin{equation}\label{bmg} 
B_{\pm 1|1} = \frac{1}{2}\qty(\openone \pm B_1), \quad  K_{\pm}^{(k)}=\alpha_k \openone \pm \beta_k B_2,
\end{equation}
where the parameters satisfy
\begin{equation}\label{bmgkradef} 
\alpha_k:=\frac{1}{2}\qty(\sqrt{\frac{1+\lambda_k}{2}}+\sqrt{\frac{1-\lambda_k}{2}}), \ \beta_k:=\frac{1}{2}\qty(\sqrt{\frac{1+\lambda_k}{2}}-\sqrt{\frac{1-\lambda_k}{2}}).
\end{equation}
Define the marginal ensembles for Bob$^{(k)}$
\begin{equation} \label{aliceprbobk}
\begin{aligned}
    \rho_{0}^{(y=1,k)} &= \frac{1}{2}\qty(\rho^{(k)}_{00}+\rho^{(k)}_{01}); \ \ \rho_{1}^{(y=1,k)} = \frac{1}{2}\qty(\rho^{(k)}_{10}+\rho^{(k)}_{11}); \\
    \rho_{0}^{(y=2,k)} &= \frac{1}{2}\qty(\rho^{(k)}_{00}+\rho^{(k)}_{10}); \ \ \rho_{1}^{(y=2,k)} = \frac{1}{2}\qty(\rho^{(k)}_{01}+\rho^{(k)}_{11}).
\end{aligned}
\end{equation} 
Upon receiving the input $y_k \in \{1,2\}$, the task of Bob$^{(k)}$ is to discriminate between the two above mentioned marginal ensembles with $y=y_k$. We fix $B_1$ and $B_2$ as the Helstrom measurements used by the first receiver, Bob$^{(1)}$, to discriminate the marginal ensembles $\qty{\rho_0^{(y,1)},\rho_1^{(y,1)}}$ for $y=1$ and $y=2$ respectively
\begin{equation}
   \rho_{0}^{(y,k=1)} -\rho_{1}^{(y,k=1)}  \propto B_{y} \, \, \forall \, \, y \in \{1,2\}.
   \label{helstrom1}
\end{equation}
In the following we show that the same $B_1$ and $B_2$ remain optimal for any subsequent Bob$^{(k)}$ to discriminate $\qty{\rho_0^{(y,k)},\rho_1^{(y,k)}}$ for $y=1$ and $y=2$ respectively if $\qty{B_1,B_2}=0$, even after the wave-function collapses due to measurements by the preceding Bobs.
\begin{Lemma}
Let the Helstrom measurements $B_{1}$ and $B_2$ are used by Bob$^{(1)}$ for discriminating the pairs $\qty{\rho_{0}^{(y,1)},\rho_{1}^{(y,1)}}$ for $y=1$ and $y=2$ respectively. If $\qty{B_1,B_2}=0$, the same measurements $B_{1}$ and $B_2$ remain the Helstrom measurements for discriminating the pair $\qty{\rho_{0}^{(y,k)},\rho_{1}^{(y,k)}}$ for $y=1$ and $y=2$, respectively, for all $k$, where 
 \begin{equation}
   \rho_{x}^{(y,k)}=\Phi^{(k-1)}\qty(\rho_{x}^{(y,k-1)}), \ \ \forall x \in \qty{0,1}, \ y\in\qty{1,2}  
 \end{equation}
and $\Phi^{(k-1)}$ is the non-selective measurement channel defined in Eq.~\eqref{aptotchan} and the above equation follows due to the linearity of the channel. Consequently, the preparation distinguishability can be expressed in terms of the trace norm as
 \begin{equation} \label{pdktnpre}
     \Delta_{y}^{(k)}=\frac{1}{2}\norm{\rho_{0}^{(y,k)} -\rho_{1}^{(y,k)}}_1  \, \, \forall \, \, y \in \{1,2\}.
 \end{equation}
\end{Lemma}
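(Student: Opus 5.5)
The plan is induction on $k$, using the linearity of $\Phi^{(k-1)}$ and the fact that $\Phi^{(k-1)}$ is self-dual. Every Kraus operator in Eq.~\eqref{aptotchan} is Hermitian: the projectors $B_{\pm1|1}$ and $K^{(k-1)}_\pm=\alpha_{k-1}\openone\pm\beta_{k-1}B_2$. Hence the Schrödinger and Heisenberg pictures of the channel coincide, and we only need its action on $\openone$, $B_1$ and $B_2$.

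\textbf{Action on $B_1$ and $B_2$.} Using $B_y^2=\openone$ and $\{B_1,B_2\}=0$, I will compute each term directly.

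For the dephasing part:
\begin{equation*}
\sum_\pm B_{\pm|1}B_1B_{\pm|1}=B_1, \qquad \sum_\pm B_{\pm|1}B_2B_{\pm|1}=\tfrac12\qty(B_2+B_1B_2B_1)=0.
\end{equation*}

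For the Lüders part:
\begin{equation*}
\sum_\pm K_\pm B_2 K_\pm=2(\alpha^2+\beta^2)B_2+4\alpha\beta\openone.
\end{equation*}
The term proportional to $\openone$ has to be tracked carefully. When the channel acts on a traceless difference $X=c\,B_2$, the result is $(\alpha^2+\beta^2)\cdot 2X+4\alpha\beta\,c\,\openone$. This $\openone$ piece seems to survive, but the difference of states must remain traceless, which forces the coefficient to vanish. So the first step is to recheck the Kraus normalisation: the true Lüders operators are $\sqrt{E_\pm}=a\openone\pm bB_2$ with $a^2+b^2=1/2$ and $2ab=\lambda/2$. Then $\sum_\pm K_\pm XK_\pm=2(a^2+b^2)X+2ab(\{X,B_2\}-\{X,B_2\})$, in which the cross terms cancel between the $\pm$ branches. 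So there is no identity piece, and I expect $B_2\mapsto B_2$ under this part.

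For $B_1$, the cross terms give $\sum_\pm K_\pm B_1K_\pm=2(a^2-b^2)B_1=\sqrt{1-\lambda^2}\,B_1$, since $B_2B_1B_2=-B_1$ and the $\pm$ cross terms again cancel.

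Averaging the two halves reproduces the displayed formulas:
\begin{equation*}
\Phi^{(k-1)}(B_1)=\tfrac12\qty(1+\sqrt{1-\lambda_{k-1}^2})B_1, \qquad \Phi^{(k-1)}(B_2)=\tfrac12B_2.
\end{equation*}

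\textbf{Induction.} The hypothesis is that $\rho_0^{(y,k-1)}-\rho_1^{(y,k-1)}=c_y^{(k-1)}B_y$, with $c_y^{(1)}\ge0$ from Eq.~\eqref{helstrom1}. The base case is exactly Eq.~\eqref{helstrom1}. For the step, linearity gives $\rho_0^{(y,k)}-\rho_1^{(y,k)}=\Phi^{(k-1)}(c_y^{(k-1)}B_y)=c_y^{(k)}B_y$ with a nonnegative multiplier, so the sign and the eigenprojectors are preserved.

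\textbf{Conclusion.} For two equiprobable states whose difference is $cB_y$ with $c\ge0$, the Helstrom measurement is the projector onto the positive part of $cB_y$, namely $\frac12(\openone+B_y)$ versus $\frac12(\openone-B_y)$. This is the measurement of $B_y$, and the optimal bias equals $\frac12\norm{cB_y}_1$, which gives Eq.~\eqref{pdktnpre}. Optimality over all measurements is the standard Helstrom result used in Eq.~\eqref{pdqtn}.

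\textbf{Main obstacle.} The delicate step is the Kraus bookkeeping. One must confirm that the Lüders operators square to $E_\pm$, so that cross terms linear in $B_2$ cancel and no identity component is generated; otherwise the difference operator would not stay proportional to $B_y$. I would verify this first, reconciling the stated $\alpha_k,\beta_k$ normalisation with $K_\pm^2=E_\pm$. A secondary check is that $B_2B_1B_2=-B_1$ holds, which is where anticommutation is essential.
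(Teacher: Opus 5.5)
Your proof is correct and follows essentially the paper's route: compute $\Phi^{(k-1)}(B_1)=\tfrac12(1+\sqrt{1-\lambda_{k-1}^2})B_1$ and $\Phi^{(k-1)}(B_2)=\tfrac12 B_2$ using $\{B_1,B_2\}=0$, then induct on $\rho_0^{(y,k)}-\rho_1^{(y,k)}\propto B_y$ by linearity, with your sign-tracking of the proportionality constant a small refinement over the paper's $s_y^{(k)}\in\mathbb{R}$. The $4\alpha\beta$ identity term was only an arithmetic slip: the cross terms cancel between $K_+$ and $K_-$ for any $\alpha_k,\beta_k$, and the paper's $\alpha_k,\beta_k$ already satisfy $\alpha_k^2+\beta_k^2=\tfrac12$ and $2\alpha_k\beta_k=\tfrac{\lambda_k}{2}$, so the ``main obstacle'' you flag needs no reconciliation.
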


\begin{proof}
From Eq.~\eqref{helstrom1}, we have $\rho_{0}^{(y,1)}-\rho_{1}^{(y,1)} \propto B_{y}$. Hence, the Helstrom observable for discriminating the pair $\qty{\rho_{0}^{(y,1)},\rho_{1}^{(y,1)}}$ is given by $B_{y}$.  

Using the explicit form of the sub-channels, we evaluate how the operators $B_1$ and $B_2$ transform under each sub-channel. Using the identity $B_i B_{y} B_i=\qty{B_i, B_{y} } B_i-B_{y}$ for all $y, i=1,2$, we find
\begin{equation}
\begin{aligned}
 \Phi_1(B_{y})&=\frac{1}{2}\qty(B_{y} +B_1 B_{y} B_1)=\frac{1}{2}\qty{B_1,B_{y}}B_1, \\
 \Phi_2^{(k-1)}(B_{y})&=2\alpha_{k-1}^2 B_{y}+2\beta_{k-1}^2 B_2B_{y} B_2=2\qty(\alpha_{k-1}^2-\beta_{k-1}^2) B_{y} +2\beta_{k-1}^2 \qty{B_2,B_{y}}B_2.
\end{aligned}
\end{equation}
Therefore, for the full channel $\Phi^{(k-1)}=\frac{1}{2}\qty(\Phi_2^{(k-1)}+\Phi_1)$ action is given by
\begin{equation} \label{aprn1}
    \begin{aligned}
\Phi^{(k-1)}(B_1)&=\frac{1}{2}\qty(1+\sqrt{1-\lambda^2_{k-1}})B_1 + \beta_{k-1}^2 \qty{B_1,B_2}B_2, \\
\Phi^{(k-1)}(B_2)&=\frac{1}{2}B_2+ \frac{1}{4}\qty{B_1,B_2}B_1.
    \end{aligned}
\end{equation}
Let  $B_{y}$ be the Helstrom measurement the discrimination of the marginal ensembles $\qty{\rho_0^{(y,k-1)},\rho_1^{(y,k-1)}}$ by the $(k-1)$-th receiver (i.e. for Bob$^{(k-1)}$), then it must hold that $\rho_{0}^{(y,k-1)}-\rho_{1}^{(y,k-1)}=s_{y}^{(k-1)}B_{y}$ for some $s_{y}^{(k-1)}  \in \mathbb{R}$. By linearity of the channel, for all $y\in \{1,2\}$,
\begin{equation}
 \rho_{0}^{(y,k)}-\rho_{1}^{(y,k)} = \Phi^{(k-1)}\qty(\rho_{0}^{(y,k-1)}-\rho_{1}^{(y,k-1)}) =s_{y}^{(k-1)}\Phi^{(k-1)}\qty(B_{y})=\begin{cases}
     \text{For } y=1 \implies & s_1^{(k-1)} \Phi^{(k-1)}(B_1) \\ 
     \text{For } y=2 \implies & s_2^{(k-1)} \Phi^{(k-1)}(B_2)
 \end{cases}.
\end{equation}
Using Eq.~\eqref{aprn1}, we see that a sufficient condition for the difference between the two marginal ensembles to remain proportional to $B_{y}$ at every step, i.e. $\rho_{0}^{(y,k)}-\rho_{1}^{(y,k)} \propto B_{y}$, is $\qty{B_1,B_2}=0$. Under this condition, the total channel action is given by
\begin{equation}
 \Phi^{(k-1)}(B_1) =\frac{1}{2}\qty(1+\sqrt{1-\lambda_{k-1}^2}) B_1, \quad  \Phi^{(k-1)}(B_2) =\frac{1}{2}B_2.
\end{equation}
Hence, we obtain that for some $s_{y}^{(k)} \in \mathbb{R}$,
\begin{equation}
 \rho_{0}^{(y,k)}-\rho_{1}^{(y,k)} = s_{y}^{(k)}B_{y} \, \, \forall \, \, y \in \{1,2\}.  
\end{equation}
The trace norm expression of the preparation distinguishability given by Eq.~\eqref{pdktnpre} follows immediately.

\end{proof}

We now evaluate the evolution of the preparation distinguishabilities
\begin{equation}\label{appditstk}
    \Delta_{y}^{(k)}=\frac{1}{2}\norm{\rho_{0}^{(y,k)} - \rho_{1}^{(y,k)}}_1=  \frac{1}{2} \abs{s_{y}^{(k-1)}} \ \norm{\Phi^{(k-1)}\qty(B_{y})}_1 \ \ \forall s_{y}^{(k-1)} \in \mathbb{R}, y \in \{1,2\}.
\end{equation}
Using the structure derived above,
\begin{equation}\label{itpres}
\begin{cases}
    \text{For } y=1 \implies & \Delta_1^{(k)} = \frac{1}{4} \abs{s_1^{(k-1)}} \ \abs{1+\sqrt{1-\lambda_{k-1}^2}} \ \norm{B_1}_1 = \frac{1}{4}\abs{s_1^{(k-1)}}\qty(1+\sqrt{1-\lambda_{k-1}^2}) \ \norm{B_1}_1, \\
      \text{For } y=2 \implies & \Delta_2^{(k)} = \frac{1}{4} \abs{s_2^{(k-1)}} \ \norm{B_2}_1.
\end{cases}
\end{equation}
On the other hand, for the previous receiver (Bob$^{(k-1)}$), using $\rho_{0}^{(y,k-1)}-\rho_{1}^{(y,k-1)}=s_{y}^{(k-1)}B_{y}$, we find that
\begin{equation}\label{itpres0}
    \Delta_y^{(k-1)}=\frac{1}{2} \abs{s_y^{(k-1)}} \ \norm{B_y}_1 \, \, \forall \, \, y \in \{1,2\}.
\end{equation}
Now, from Eqs.~\eqref{itpres} and \eqref{itpres0}, we obtain the recursive relations
\begin{equation}\label{deltainduct1}
 \Delta_1^{(k)}= \frac{1}{2}\qty(1+\sqrt{1-\lambda_{k-1}^2}) \Delta_1^{(k-1)}, \quad \Delta_2^{(k)}= \frac{1}{2} \Delta_2^{(k-1)}.
\end{equation}

The action of the non-selective measurement channel (\ref{aptotchan}) on the initial states $\qty{\rho_{x_1x_2}}$ is presented in Fig. \ref{figbloconhel}.

\begin{figure}[t!]
\includegraphics[width=0.5\linewidth]{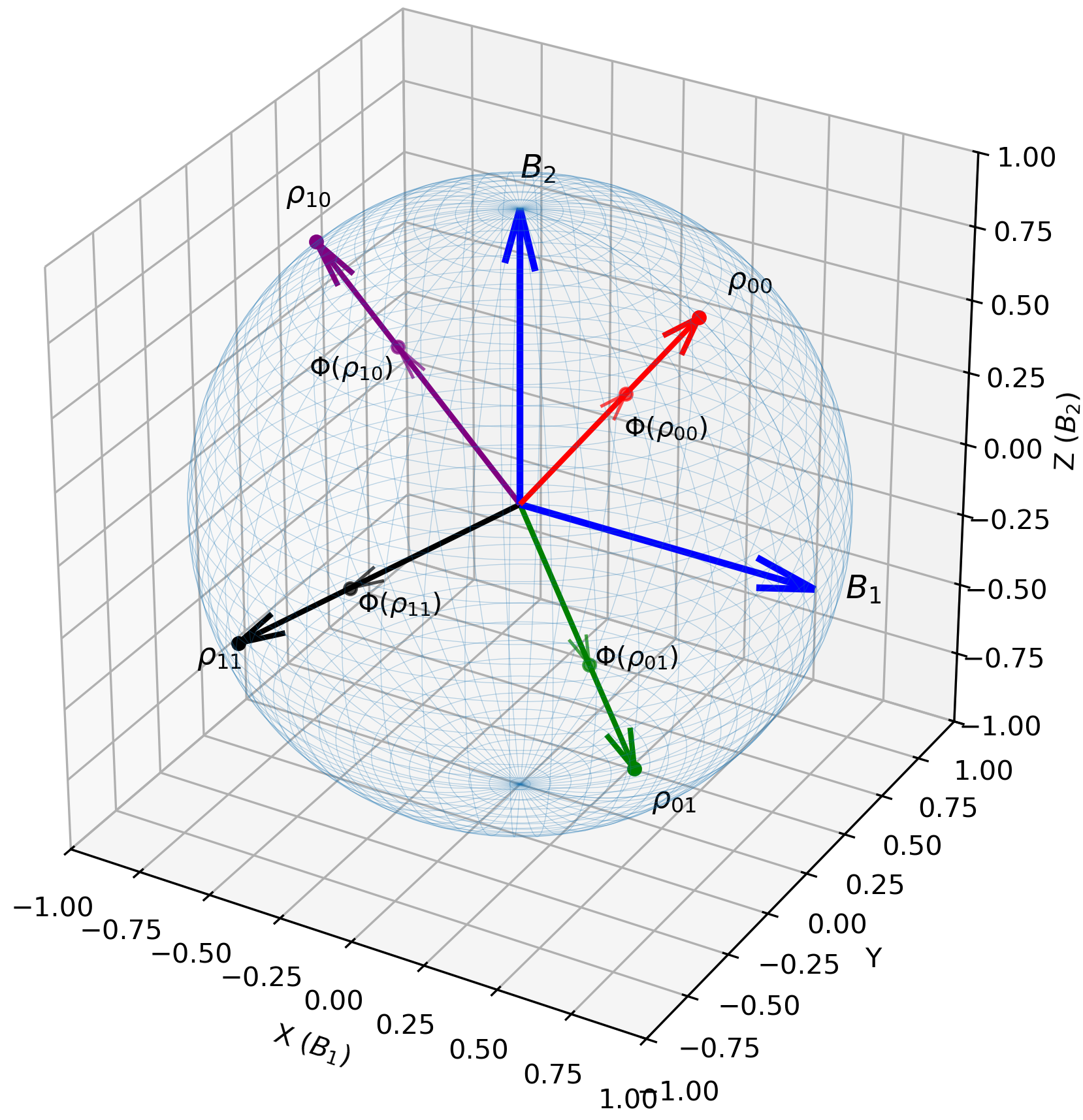}
\caption{Bloch sphere representation of the preparation states and measurement directions before and after the action of the non-selective measurement channel. The initial states $\qty{\rho_{x_1x_2}}$ and the transformed states $\qty{\Phi(\rho_{x_1x_2})}$  are shown as coloured dotted vectors, while the measurement observables $B_1$  and $B_2$ are indicated by solid arrows along the $X$- and $Z$-axes, respectively. The channel induces an anisotropic contraction of the Bloch vectors without altering their directions, thereby preserving the optimal (Helstrom) measurement basis while reducing the distinguishability.}
\label{figbloconhel}
\end{figure}

%%%%%%%%%%%%%%%%%%%%%%%%%%%%%%%%%%%%%%%%%%%%%%%%%%%%%%%%%%%%%%%%%%%%%%%%%%%%%%%%%%%%%%%%%%%%%%%%%%%%%%%%%%%%%%%%%%%%%%%%%%%%%%%%%%%%%%%%%%%%%%%%%%%%%%%%%%%%%%%%%%%%%%%%%%%%%%%%%%%%%%%%%%%%%%%%%%%%%%%%%%%%%%%%%%%%%%%%%%%%%%%%%%%%%%%%%%%%%%%%%%%%%%%%%%%%%%%%%%%%%%%%%%%%%%%%%%%%%%%%%%%%%%%%%%%%%%%%%%%%%%%%%%%%%%%%%%%%%%%%%%%

\section{Arbitrarily Large Sequence of Receivers achieving Quantum Advantages in $2 \to 1$ RAC} \label{apsharecal}

At first, we present the formal proof of the Theorem 3 mentioned in the main paper. 

\begin{proof}
We begin by recalling the evolution of the preparation distinguishabilities. Iterating the recursion relations given in Eq.~(\ref{deltainduct1}), we obtain
\begin{equation}\label{deltainduct2}
 \Delta_1^{(k)}= \frac{M_k}{2^{k-1}}\Delta_1^{(1)}, \quad \Delta_2^{(k)}= \frac{1}{2^{k-1}} \Delta_2^{(1)}, \quad \quad M_{k\geq 2}:=\prod_{l=1}^{k-1} \qty(1+\sqrt{1-\lambda_{l}^{2}}), \  M_1:=1.
\end{equation}
Without loss of generality, we parametrise the initial distinguishabilities as
\begin{equation}
\label{distingbob1}
 \forall r_1,r_2 \in [0,1], \omega \in \qty(0, \frac{\pi}{2}), \ \   \Delta^{(1)}_1:=r_1\cos\omega, \quad \Delta^{(1)}_2:=r_2\sin\omega, \ \ \text{with } \qty(\Delta^{(1)}_1)^2+\qty(\Delta^{(1)}_2)^2 \leq \max(r_1^2,r_2^2) \leq 1.
\end{equation}
Using the condition for quantum advantage, $\mathcal{P}^{(k)}_{\rm avg}>\frac{3}{4}$, we now establish the existence of a sequence $\qty{\lambda_k \in (0,1)}$ for the parameter $\omega \in (0,\frac{\pi}{2})$ that enables quantum advantages by an arbitrarily large sequence of independent receivers. It suffices to consider the case $r_1=1$, and $r_2=r\in (0,1]$. In this setting, a sufficient condition is
\begin{equation}\label{lambgenr1}
    \lambda_k > \frac{2^{k-1} -\cos\omega M_k}{r \sin\omega } \ \ \forall k \geq 2, \ \ \ \text{with } \lambda_1 > \frac{1-\cos\omega}{r \sin\omega}=\frac{1}{r}\tan\frac{\omega}{2}.
\end{equation}
Introducing a parameter $\epsilon > 0$, we define
\begin{equation}\label{lambseq}
\forall r \in (0,1], \ \ \lambda_{k}(\omega) := \begin{cases}
    (1+\epsilon)\frac{1}{r}\tan\frac{\omega}{2} & \text{for } k=1 \\
    (1+\epsilon)\frac{\qty(2^{k-1} -\cos\omega M_k)}{r \sin\omega} & \text{for } k \geq 2, \ 
\end{cases}, 
\end{equation} 
A sequence $\qty{\lambda_k}_{k=1}^{N}\subset(0,1)$ satisfying  the above recursion (Eq.~\ref{lambseq}) is called feasible. 

The feasibility of the sequence $\qty{\lambda_k}$ is established as follows. First, we show that whenever the sequence satisfies $\lambda_k<1 \ \forall k$, the recursive relation implies that the sequence is strictly monotonically increasing for all $0<\omega<\pi/2$ and $r,\epsilon>0$. Within this range of $\omega$, since $\lambda_1>0$, monotonicity guarantees that every subsequent element of the sequence is also positive, hence $\lambda_k>0 \ \ \forall k$. Finally, we address the remaining question: whether within this range of $\omega$, there exists a non-empty sub interval for which all $\lambda_k$ remain strictly below unity for any $r,\epsilon>0$.

Using Eq.~\eqref{lambseq}, we obtain
\begin{equation}
    \lambda_{k+1}-2 \lambda_k=\frac{(1+\epsilon)\cos\omega}{r \sin\omega}\qty(2M_k-M_{k+1}),
\end{equation}
Since $M_{k+1}=M_k\qty(1+\sqrt{1-\lambda_k^2})$, it follows that
\begin{equation}
    \lambda_{k+1}-2 \lambda_k=\frac{(1+\epsilon)\cos\omega M_k}{r \sin\omega}\qty(1-\sqrt{1-\lambda_k^2}),
\end{equation}
Now, consider that  $\lambda_j<1 \ \forall j\leq k$, hence $\sqrt{1-\lambda_k^2}<1$ and $M_k>0$. Further, assuming $\omega\in\qty(0,\pi/2)$, implying $\sin\omega>0$, $\cos\omega>0$, we find
\begin{equation}\label{monosttin}
    \forall k, r>0,\epsilon>0, \ \ \lambda_{k+1}>2 \lambda_k>\lambda_k.
\end{equation}
Note that from Eq.~\eqref{lambseq}, the assumed range of $\omega\in\qty(0,\pi/2)$ ensures the positivity of $\lambda_1$, and the strict monotonicity ensures the positivity of all $\lambda_k \ \ \forall k$ for any $r,\epsilon>0$. 

In the following we show that there exists a sub-interval of assumed $\omega\in(0,\pi/2)$ such that all $\lambda_k<1$ for any $r,\epsilon>0$. To show this, let us analyse the recursion in the small-angle approximations $\sin\omega \approx \omega + \order{\omega^3}$ and $\cos\omega\approx 1- \omega^2/2 + \order{\omega^4}$. In this regime, from Eq.~\eqref{lambseq}, we obtain
\begin{equation}
  \lambda_1 = c_1^{(1)}\omega + \order{\omega^3}, \ \  c_1^{(1)}:=\frac{1+\epsilon}{2r}>0.
\end{equation}
Assume inductively that, for all $j < k$, $\lambda_j=\order{\omega}$. Then $\lambda_j^2=\order{\omega^2}$, and consequently 
\begin{equation}
  1+\sqrt{1-\lambda_j^2}=2\qty(1-\frac{\lambda_j^2}{4})+\order{\lambda^4_j} = 2\qty(1-\frac{\lambda_j^2}{4})+\order{\omega^4}  
\end{equation}
Substituting into the definition of $M_k$ given by Eq.~\eqref{deltainduct2}, we find
\begin{equation} \label{mk1}
M_k=2^{k-1}\prod_{j=1}^{k-1}\qty[1-\frac{\lambda_j^2}{4}+\order{\omega^4}].
\end{equation}
Note that since $\lambda_j^2=\order{\omega^2}$, therefore $\lambda_j^2\lambda_i^2=\order{\omega^4}$. More generally, every product involving $p$ quadratic terms, $\prod_{m=1}^p \lambda_{j_m}^2$ is of order $\order{\omega^{2p}}$. Hence all higher-order mixed terms generated from the product expansion of $M_k$ contribute only at order $\order{\omega^4}$ or higher. Therefore, Eq.~\eqref{mk1} reduces to
\begin{equation}\label{mk2}
  M_k = 2^{k-1} \qty(1-\frac{1}{4}\sum_{j=1}^{k-1}\lambda_j^2 + \order{\omega^4})=2^{k-1} \qty(1-\frac{S_k}{4}) + \order{\omega^4}, \ \ S_k:=\sum_{j=1}^{k-1}\lambda_j^2, \ \ S_1=0. 
\end{equation}
Therefore,  we get from Eq.(\ref{lambseq})
\begin{equation} \label{darunlam}
    \lambda_k=2^{k-1}c_1^{(1)} \qty(\omega+\frac{S_k}{2 \omega})+\order{\omega^3}.
\end{equation}
Since $S_k=\order{\omega^2}$, it follows immediately that $\lambda_k=\order{\omega}$. This proves inductively that, for any $\epsilon >0$ and $r \in (0,1]$, one can always choose a $\omega \to 0^+$,  such that $\lambda_k$ becomes arbitrarily small ensuring $\lambda_k<1 \ \ \forall k$. This prove the feasibility of the sequence $\qty{\lambda_k}_{k=1}^N$.
\end{proof}

%%%%%%%%%%%%%%%%%%%%%%%%%%%%%%%%%%%%%%%%%%%%%%%%%%%%%%%%%%%%%%%%%%%%%%%%%%%%%%%%%%%%%%%%%%%%%%%%%%%%%%%%%%%%%%%%%%%%%%%%%%%%%%%%%%%%%%%%%%%%%%%%%%%%%%%%%%%%%%%%%%%%%%%%%%%%%%%%%%%%%%%%%%%%%%%%

\subsection{Estimating $\omega$ for a fixed number of receivers achieving quantum advantages}

The above result tells that there exists an $\omega$ for which an arbitrarily large sequence of receivers can each get a quantum advantage. In the following, we present a process to estimate the value of $\omega$ required for the sequential quantum advantage by $k$ receivers for very small values of $\omega$ and $\lambda_{j<k}$. Using Eqs.~\eqref{darunlam} and \eqref{lambseq}, it follows that
\begin{equation}
\lambda_2 \approx \frac{1+\epsilon}{r}\qty[\omega\qty(1+\frac{(c_1^{(1)})^2}{2})-\omega^3 \frac{(c_1^{(1)})^2}{4}]  = c_2^{(1)} \omega + c_2^{(3)} \omega^3,
\end{equation}
with
\begin{equation} \label{lambda2approxappendix}
c_2^{(1)}:=\frac{1+\epsilon}{r}\qty[1+\frac{(1+\epsilon)^2}{8r^2}] = 2 c^{(1)}_{1}+(c^{(1)}_{1})^{3}, \ \ c_2^{(3)}:=-\frac{(1+\epsilon)^3}{16r^3} = -\frac{(c^{(1)}_{1})^{3}}{2}.
\end{equation}
Now assume inductively that, for all $j<k$,
\begin{equation} \label{notunlamk1}
 \forall j<k, \ \lambda_{j}=c_{j}^{(1)} \omega + c_{j}^{(2)} \omega^2+ \cdots + c_{j}^{(2^{j}-1)} \omega^{2^{j}-1}=\sum_{m=1}^{2^{j}-1} c_j^{(m)}\omega^m .
\end{equation}
Using Eq.~\eqref{notunlamk1}, we obtain
\begin{equation}\label{notunlam2}
    S_k=\sum_{j=1}^{k-1} \qty(\sum_{m=1}^{2^{j}-1} c_j^{(m)}\omega^m)^2=\sum_{j=1}^{k-1}\sum_{m=1}^{2^{j}-1}\sum_{n=1}^{2^{j}-1} c_j^{(m)} c_j^{(n)} \omega^{m+n} 
\end{equation}
Hence, $S_k$ is itself a polynomial in $\omega$, whose lowest-order term is proportional to $\omega^2$ and highest power is $\omega^{2^{k}-2}$. Substituting this expression into Eq.~\eqref{lambseq} and using the small angle approximation, we obtain
\begin{align} \label{inductlamb}
    \lambda_k & \approx 2^{k-1}c^{(1)}_{1}\qty[\omega+\qty(\frac{1}{2\omega}-\frac{\omega}{4})\sum_{j=1}^{k-1}\sum_{m=1}^{2^{j}-1}\sum_{n=1}^{2^{j}-1} c_j^{(m)} c_j^{(n)} \omega^{m+n}]\\
    & = 2^{k-1}c^{(1)}_{1}\qty[\omega+\frac{1}{2}\sum_{j=1}^{k-1}\sum_{m=1}^{2^{j}-1}\sum_{n=1}^{2^{j}-1} c_j^{(m)} c_j^{(n)} \omega^{m+n-1}-\frac{1}{4}\sum_{j=1}^{k-1}\sum_{m=1}^{2^{j}-1}\sum_{n=1}^{2^{j}-1} c_j^{(m)} c_j^{(n)} \omega^{m+n+1}]
\end{align}
%For the first and the second term above, performing the transformations $(m-1) \to m$ and $m\to(m-1)$ respectively gives
%\begin{equation}
%\lambda_k \approx 2^{k-2}c^{(1)}_{1}\qty[\omega+\frac{1}{2}\sum_{j=1}^{k-1}\sum_{m=0}^{2^{j}-2}\sum_{n=1}^{2^{j}-1} c_j^{(m+1)} c_j^{(n)} %\omega^{m+n}-\frac{1}{4}\sum_{j=1}^{k-1}\sum_{m=2}^{2^{j}}\sum_{n=1}^{2^{j}-1} c_j^{(m-1)} c_j^{(n)} \omega^{m+n}]
%\end{equation}
%Defining $c^{(0)}_j=c^{(2^{j-1}+1)}_{j}=0$, w
We can rewrite this expression as 
\begin{equation}\label{weirdlamk1}
\lambda_k  \approx  2^{k-1}c^{(1)}_{1}\qty(1+\sum_{j=1}^{k-1} \frac{(c^{(1)}_{j})^{2}}{2}) \ \omega + O(\omega^{2})
\end{equation}
To obtain a rough estimate of the number of Bobs allowed by the upper bound on $\omega$, Eq.~\eqref{inductlamb} indicates that, for sufficiently small $\omega$, higher-order terms in $\omega^{\geq 2}$ may be neglected. Hence, 
\begin{equation}\label{weirdlamk1}
\lambda_k  \approx  2^{k-1}c^{(1)}_{1}\qty(1+\sum_{j=1}^{k-1} \frac{(c^{(1)}_{j})^{2}}{2}) \ \omega.
\end{equation}
In this approximation, the value of $\lambda_k$ depends only on the coefficient $c^{(1)}_j$. Next, our goal is to express each $c^{(1)}_j$ in terms of $c^{(1)}_1$, where $c_1^{(1)}:=(1+\epsilon)/(2r)$, such that all $\lambda_k$ can be expressed solely in terms of $\epsilon$, $r$ and $\omega$. Comparing the coefficients of $\omega$ in Eq.~\eqref{weirdlamk1} and \eqref{notunlamk1}, we obtain
\begin{equation}\label{lambfirstordercoeff}
c^{(1)}_{k} \approx 2^{k-1}c^{(1)}_{1}\qty(1+\sum_{j=1}^{k-1} \frac{(c^{(1)}_{j})^{2}}{2}) \ \ \forall \, k \geq 2, \ \ \text{with } c_1^{(1)}:=\frac{1+\epsilon}{2r}.
\end{equation}
To obtain an estimate of $c^{(1)}_{k}$, We now establish the following result.
\begin{Lemma}\label{prop:odd}
For every $k \geq 2$, the coefficient $c^{(1)}_k$ is a polynomial in $c^{(1)}_{1}$ containing only odd powers, i.e.
\begin{equation}
    c^{(1)}_k = \sum_{n=0}^{2^{k-1}-1} b_n^{(k)}\qty(c^{(1)}_{1})^{2n+1}, 
    \end{equation}
\end{Lemma}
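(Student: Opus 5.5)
We argue by strong induction on $k$, using the recursion in Eq.~\eqref{lambfirstordercoeff}, $c^{(1)}_{k} = 2^{k-1}c^{(1)}_{1}\bigl(1+\tfrac{1}{2}\sum_{j=1}^{k-1}(c^{(1)}_{j})^{2}\bigr)$. We track two things: that $c^{(1)}_k$ is an odd polynomial in $x:=c^{(1)}_1$, and what its degree is. Write $d_k$ for the degree and aim to show $d_k = 2^{k}-1$. Then the top index $n$ in $(x)^{2n+1}$ is $(d_k-1)/2 = 2^{k-1}-1$, as in the statement.

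The base case is $k=2$. Here $c^{(1)}_2 = 2x + x^3$, which matches Eq.~\eqref{lambda2approxappendix}. This is odd with degree $3 = 2^2-1$, so $b^{(2)}_0=2$ and $b^{(2)}_1=1$. It is convenient to also count $k=1$: $c^{(1)}_1 = x$ is trivially odd, of degree $1 = 2^1-1$.

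For the inductive step, assume that for every $j<k$ the coefficient $c^{(1)}_j$ is an odd polynomial in $x$ of degree $2^j-1$. Squaring an odd polynomial gives an even polynomial, since every product $x^{2a+1}x^{2b+1}=x^{2(a+b+1)}$ has even exponent. So each $(c^{(1)}_j)^2$ is even with degree $2(2^j-1)=2^{j+1}-2$. Then $1+\tfrac12\sum_{j<k}(c^{(1)}_j)^2$ is a sum of even polynomials, hence even. Multiplying by the odd factor $2^{k-1}x$ gives an odd polynomial, which settles parity.

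For the degree, the largest term in the sum comes from $j=k-1$ and has degree $2^{k}-2$. Its leading coefficient is $\tfrac12(\text{lead } c^{(1)}_{k-1})^2>0$, so it cannot cancel. In fact all coefficients stay nonnegative by induction, since the recursion involves only sums and products of positive numbers. Multiplying by $x$ gives degree $2^{k}-1$, closing the induction. Writing the result as $\sum_{n=0}^{2^{k-1}-1} b^{(k)}_n x^{2n+1}$ then gives the claim, with the $b^{(k)}_n\ge 0$ and the top one nonzero. A recurrence for the $b^{(k)}_n$ follows by convolving the coefficients of the squares, though the lemma does not require it.

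The only real subtlety is the degree bookkeeping, which fixes the upper summation limit $2^{k-1}-1$. Strictly, the lemma holds with any upper limit at or above the true degree, since missing terms can be given zero coefficients. Nonnegativity of the coefficients is what rules out cancellation and makes the stated limit exact. We should also note that the lemma is exact for the truncated recursion in Eq.~\eqref{lambfirstordercoeff}, which keeps only first-order terms in $\omega$. It is not a statement about the full $\lambda_k$.
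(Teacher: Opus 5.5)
Your proof is correct and follows essentially the same route as the paper. Both argue by induction on $k$ through Eq.~\eqref{lambfirstordercoeff}, get parity from the fact that squares of odd polynomials are even, and get the degree $2^k-1$ from the dominant $j=k-1$ term. Your explicit no-cancellation remark already follows from the strict degree separation, so nonnegativity is not needed. That remark and your caveat that the lemma concerns only the truncated first-order recursion are small but worthwhile additions to the paper's argument.
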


\begin{proof}
For $k=2$,  from Eq.(\ref{weirdlamk1})
\begin{equation}
c^{(1)}_{2}=2 c^{(1)}_{1}+(c^{(1)}_{1})^{3},
\label{c1lemma}
 \end{equation}
which is consistent with Eq.(\ref{lambda2approxappendix}). The above  is clearly a polynomial containing only odd powers of $c^{(1)}_{1}$. Now, assume inductively that $c^{(1)}_j$ contains only odd powers of $c^{(1)}_1$ for all $1 \leq j \leq k-1$. Then each $c^{(1)}_j$ is an odd function of $c^{(1)}_1$, and consequently $(c^{(1)}_j)^{2}$ is an \emph{even} function containing only even powers of $c^{(1)}_1$. Therefore, the sum $\sum_{j=1}^{k-1}\frac{(c^{(1)}_j)^2}{2}$ contains only even powers of $c^{(1)}_1$, and adding $1$ yields a polynomial in $(c^{(1)}_1)^2$ with a nonzero constant term. Multiplying by $2^{k-1}c^{(1)}_1$ then shifts all powers by one, producing only odd powers of $c^{(1)}_1$. Hence $c^{(1)}_k$ is a polynomial in $c^{(1)}_1$ containing exclusively odd powers, completing the induction.

From Eq.(\ref{lambfirstordercoeff}), it follows that the lowest power of $c^{(1)}_{1}$ in $c^{(1)}_{k}$ is one. We now prove that the highest power of $c^{(1)}_{1}$ in $c^{(1)}_{k}$ is $2^k-1$, which will complete the proof of the lemma. From Eq.(\ref{c1lemma}), for $k=2$, the highest power of $c^{(1)}_{1}$ is $2^k-1=3$. Let us assume that the highest power of $c^{(1)}_{1}$ in $c^{(1)}_{k}$ is $2^k-1$ for some $k \geq 2$. Now, from Eq.(\ref{lambfirstordercoeff}), we have 
\begin{equation}\label{lambfirstordercoeff2}
c^{(1)}_{k+1} \approx 2^{k}c^{(1)}_{1}\qty(1+\sum_{j=1}^{k} \frac{(c^{(1)}_{j})^{2}}{2}),
\end{equation}
in which the highest power of $c^{(1)}_{1}$ will be contributed from the term $c^{(1)}_{1} (c^{(1)}_{k})^{2}/2$. As assumed, the highest power of $c^{(1)}_{1}$ in $c^{(1)}_{k}$ is $2^k-1$. Hence, the highest power of $c^{(1)}_{1}$ in $c^{(1)}_{1} (c^{(1)}_{k})^{2}/2$ is $2(2^k-1)+1 = 2^{k+1}-1$. This completes the proof.
\end{proof}
In view of Lemma~\ref{prop:odd}, we write
\begin{equation}\label{eq:ansatz}
    c^{(1)}_j=2^{j-1}c^{(1)}_1\,P_{j}(x), \qquad x := (c^{(1)}_1)^2,
\end{equation}
where $P_{j}(x) = \sum_{n=0}^{2^{j-1}-1} a_n^{(j)}\,x^n$ is a polynomial in $x$. We take $P_1(x)=1$ corresponding to the convention $c^{(1)}_{1}=c^{(1)}_{1}P_1(x)$. Substituting this into into Eq.~(\ref{lambfirstordercoeff}), for $k\geq 2$ we find
\begin{equation}\label{eq:Prec}
    c_{k}^{(1)}=2^{k-1}c^{(1)}_{1}P_{k}(x)
    = 2^{k-1}c^{(1)}_{1}\left(1 + \frac{x}{2}
       + \sum_{j=2}^{k-1}\frac{2^{2(j-1)}xP_{j}(x)^2}{2}\right) \implies  P_{k}(x)=1 + \frac{x}{2} + \sum_{j=2}^{k-1} 2^{2j-3}\,x\,P_{j}(x)^2.
\end{equation}
Using the expression for $P_{k-1}(x)$, this recurrence relation may be rewritten as
\begin{equation}\label{eq:step}
P_k(x) =P_{k-1}(x) + 2^{2k-5}\,x\,P_{k-1}(x)^2,
\end{equation}
which holds for $k > 2$, with initial conditions $P_1(x) =1$ and $P_{2}(x)=1+\frac{x}{2}$. To verify consistency, let us compute the first few terms
\begin{equation}
\begin{aligned}
    P_3(x) &= P_2 + 2xP_{2}^2
            = \left(1+\tfrac{x}{2}\right)+2x\left(1+x+\tfrac{x^2}{4}\right)
            = 1 + \frac{5x}{2} + 2x^2 + \tfrac{x^3}{2},\\
    P_4(x) &= P_3 + 2^{3}x P_{3}^2= 1 +\frac{21x}{2} + 42x^{2} + \frac{165x^{3}}{2} +88x^{4} + 52x^{5} + 16x^{6} + 2x^{7}.\\
\vdots
\end{aligned}
\end{equation}
These reproduce the explicit expressions obtained by direct computation
\begin{equation}\label{somcoeff}
\begin{aligned}
    c^{(1)}_{2} &= 2 c^{(1)}_{1}+(c^{(1)}_{1})^{3}, \\
    c^{(1)}_{3} &= 4c^{(1)}_{1} + 10(c^{(1)}_1)^{3} + 8(c^{(1)}_1)^5 + 2(c^{(1)}_1)^7,\\
    c^{(1)}_{4} &= 8c^{(1)}_{1} + 84(c^{(1)}_1)^{3} + 336(c^{(1)}_{1})^{5} + 660(c^{(1)}_1)^{7} + 704(c^{(1)}_1)^{9} + 416(c^{(1)}_1)^{11} + 128(c^{(1)}_1)^{13} + 16(c^{(1)}_1)^{15}, \\ 
    \vdots
\end{aligned}
\end{equation}
It should be noted that the bound on $\lambda_k(\omega)$ obtained in Eq.~\eqref{weirdlamk1} is not tight, since higher-order contributions in $\omega$ have been neglected. A more accurate bound could be obtained by retaining higher-order terms in the perturbative expansion. Nevertheless, the present estimate provides a useful guideline for constructing a sequential scenario exhibiting quantum advantage, and for selecting suitable values of $r,\epsilon,\omega$. 

Now, suppose we aim to achieve quantum advantages for up to $k$ Bobs within the linear approximation for $\lambda_j$. The final Bob (Bob$^{(k)}$) can choose $\lambda_{k}\approx c^{(1)}_{k}\omega=1$, where higher order terms of $\omega$ have been neglected. From it, we can choose $\omega = 1/c^{(1)}_{k}$, where $c^{(1)}_{k}$ is expressed in terms of $c^{(1)}_{1} = (1 + \epsilon)/(2 r)$ in (\ref{somcoeff}). Hence, choosing $\epsilon \to 0$ and $r \leq 1$, one can estimate $\omega$ for this purpose.

As an illustrative example, let us choose $r=1$. Suppose that we wish to achieve quantum advantages for up to four Bobs ($k=4$). Setting $\lambda_{4}\approx c^{(1)}_{4}\omega=1$, Eq.~\eqref{somcoeff} gives
\begin{equation}
\omega=\frac{1}{ 8c^{(1)}_{1} + 84(c^{(1)}_1)^{3} + 336(c^{(1)}_{1})^{5} + 660(c^{(1)}_1)^{7} + 704(c^{(1)}_1)^{9} + 416(c^{(1)}_1)^{11} + 128(c^{(1)}_1)^{13} + 16(c^{(1)}_1)^{15}}.
\end{equation}
Assuming $\epsilon \ll 1$ and retaining only linear terms, we approximate $(c^{(1)}_{1})^{n}\approx (1+n\epsilon)/2^n$, which yields
\begin{equation}
\omega \approx \frac{2048}{85\qty(765+3347\epsilon)} \xRightarrow{\epsilon=10^{-4}} \omega \approx 0.0315 \gtrsim \frac{1}{2^{5}}.
\end{equation} 
Thus, for $\omega\sim 2^{-5}$, one finds that at least four Bobs can simultaneously and independently achieve quantum advantage, i.e. $\mathcal{P}_{avg}^{(k)}>\frac{3}{4} \ \ \forall k \in \{1,2,3,4\}$.

\end{document}